\title{Solving $k$-SUM using few linear queries}
\titlerunning{Solving $k$-SUM using few linear queries} 
\author[1]{Jean Cardinal\thanks{Supported by the ``Action de Recherche Concert\'ee'' (ARC) COPHYMA, convention number 4.110.H.000023.}}
\author[2]{John Iacono\thanks{Research partially completed while on on sabbatical at the
Algorithms Research Group of the
D\'{e}partement d'Informatique at the
Universit\'{e} Libre de Bruxelles with support from
a Fulbright Research Fellowship,
the Fonds de la Recherche Scientifique --- FNRS,
and NSF grants CNS-1229185, CCF-1319648, and CCF-1533564.}}
\author[1]{Aur\'elien Ooms\thanks{Supported by the Fund for Research Training in Industry and Agriculture (FRIA).}}
\affil[1]{Universit\'e libre de Bruxelles (ULB)\\
  Brussels, Belgium\\
  \texttt{\{jcardin,aureooms\}@ulb.ac.be}}
\affil[2]{New York University\\ New York, United States of America\\
  \texttt{icalp2016submission@johniacono.com}}
\authorrunning{J.\,Cardinal, J.\,Iacono, and A.\,Ooms} 
\subjclass{F.2.2 Nonnumerical Algorithms and Problems}
\keywords{\kSUM{}\ problem, linear decision trees, point location, $\varepsilon$-nets}
\begin{document}

\maketitle


\begin{abstract}
The \kSUM{} problem is given $n$ input real numbers to determine whether any
$k$ of them sum to zero. The problem is of tremendous importance in the
emerging field of complexity theory within $P$, and it is in particular open
whether it admits an algorithm of complexity $O(n^c)$ with $c<\lceil
\frac{k}{2} \rceil$. Inspired by an algorithm due to Meiser (1993), we show
that there exist linear decision trees and algebraic computation trees of depth
$O(n^3\log^3 n)$ solving \kSUM{}. Furthermore, we show that there exists a
randomized algorithm that runs in \(\tilde{O}(n^{\ceil{\frac{k}{2}}+8})\) time,
and performs $O(n^3\log^3 n)$ linear queries on the input. Thus, we show that
it is possible to have an algorithm with a runtime almost identical (up to the
$+8$) to the best known algorithm but for the first time also with the number
of queries on the input a polynomial that is independent of $k$. The
$O(n^3\log^3 n)$ bound on the number of linear queries is also a tighter bound
than any known algorithm solving \kSUM{}, even allowing unlimited total time
outside of the queries.
By simultaneously achieving few queries to the input without significantly
sacrificing runtime vis-\`{a}-vis known algorithms, we deepen the understanding
of this canonical problem which is a cornerstone of complexity-within-$P$.

We also consider a range of tradeoffs between the number of terms involved in
the queries and the depth of the decision tree. In particular, we prove that
there exist $o(n)$-linear decision trees of depth $o(n^4)$.

\end{abstract}


\section{Introduction}

The \kSUM{} problem is defined as follows: given a collection of real numbers, and a constant $k$,
decide whether any $k$ of them sum to zero. 
It is a fixed-parameter version of the subset-sum problem, a standard \textit{NP}-complete
problem. The \kSUM{} problem, and in particular the special case of \threeSUM{},
has proved to be a cornerstone of the fine-grained complexity program aiming at
the construction of a complexity theory for problems in $P$. In particular,
there are deep connections between the complexity of \kSUM{}, the Strong
Exponential Time Hypothesis~\cite{PW10,CGIMPS15}, and the complexity of many
other major problems in
$P$~\cite{GO95,BH99,MO01,P10,ACLL14,AVW14,GP14,KPP14,ALW15,AWY15,CL15}.

It has been long known that the \kSUM{} problem can be solved in time
$O(n^{\frac{k}{2}}\log n)$ for even $k$, and $O(n^{\frac{k+1}{2}})$ for odd
$k$. Erickson~\cite{E99} proved a near-matching lower bound in the $k$-linear
decision tree model. In this model, the complexity is measured by the depth of
a decision tree, every node of which corresponds to a query of the form
$q_{i_1} + q_{i_2} + \cdots + q_{i_k} \ask{\le} 0$, where $q_1, q_2, \ldots, q_n$ are the
input numbers. In a recent breakthrough paper, Gr\o nlund and
Pettie~\cite{GP14} showed that in the $(2k-2)$-linear decision tree model,
where queries test the sign of weighted sums of up to $2k-2$ input numbers, only
$O(n^\frac{k}{2}\sqrt{\log n})$ queries are required for odd values of $k$. In
particular, there exists a $4$-linear decision tree for \threeSUM{} of depth
$\tilde{O}(n^\frac{3}{2})$ (here the notation $\tilde{O}$ ignores polylogarithmic
factors), while every 3-linear decision tree has depth $\Omega
(n^2)$~\cite{E99}. This indicates that increasing the size of the queries,
defined as the maximum number of input numbers involved in a query, can yield
significant improvements on the depth of the minimal-height decision tree. Ailon and
Chazelle~\cite{AC05} slightly extended the range of query sizes for which a
nontrivial lower bound could be established, elaborating on Erickson's
technique.

It has been well established that there exist nonuniform
polynomial-time algorithms for the subset-sum problem. One of them was
described by Meiser~\cite{M93}, and is derived from a data structure for point
location in arrangements of hyperplanes using the bottom vertex decomposition.
This algorithm can be cast as the construction of a linear decision tree in which
the queries have non-constant size.

\subsection{Our results.}
Our first contribution, in Section~\ref{sec:query-complexity}, is through a
careful implementation of Meiser's basic algorithm idea~\cite{M93} to show the
existence of an $n$-linear decision tree of depth $\tilde{O}(n^3)$ for \kSUM{}.
Although the high-level algorithm itself is not new, we refine the
implementation and analysis for the \kSUM{} problem.
Meiser presented his algorithm as a general method of point location in $\H{}$
given $n$-dimensional hyperplanes that yielded a depth $\tilde{O}(n^4 \log
\card{\H{}})$ linear decision tree; when viewing the \kSUM{} problem as a point
location problem, $\H{}$ is $O(n^k)$ and thus Meiser's algorithm can be viewed
as giving a $\tilde{O}(n^4)$-height linear decision tree. We reduce this height
to $\tilde{O}(n^3)$ for the particular hyperplanes that an instance $\kSUM$
problem presents when viewed as a point location problem.
While the original algorithm was cast as a nonuniform polynomial-time
algorithm, we show that it can be implemented in the linear decision tree
model. This first result also implies the existence of nonuniform RAM
algorithms with the same complexity, as shown in Appendix~\ref{app:act}.

There are two subtleties to this result. The first is inherent to the chosen
complexity model: even if the number of queries to the input is small (in
particular, the degree of the polynomial complexity is invariant on $k$), the
time required to \emph{determine which queries should be performed} may be
arbitrary. In a na\"ive analysis, we show it can be trivially bounded by
$\tilde{O}(n^{k+2})$. In Section~\ref{sec:time-complexity} we present an algorithm to
choose
which decisions to perform whereby the running time can be reduced to
$\tilde{O}(n^{\frac{k}{2}+8})$. Hence, we obtain an
$\tilde{O}(n^{\frac{k}{2}+8})$ time randomized algorithm in the RAM model
expected to perform $\tilde{O}(n^3)$ linear queries on the input\footnote{%
	Gr{\o}nlund and
	Pettie~\cite{GP14} mention the algorithms of auf Der Heyde~\cite{A84} and
	Meiser~\cite{M93}, and state ``(\ldots) it was known that all \kLDT{} problems
	can be solved by $n$-linear decision trees with depth $O(n^5\log
	n)$~\cite{M93}, or with depth $O(n^4\log (nK))$ if the coefficients of the
	linear function are integers with absolute value at most $K$~\cite{A84}.
	Unfortunately these decision trees are not efficiently constructible. The
	time required to determine \emph{which} comparisons to make is
	exponential.'' We prove that the trees can have depth $\tilde{O}(n^3)$ and
	that the whole algorithm can run in randomized polynomial-time.}.

The second issue we address is that the linear queries in the above algorithm may
have size $n$, that is, they may use all the components of the input.
The lower bound of Erickson shows that if the queries are of minimal size, the number
of queries cannot be a polynomial independent of $k$ such as what we obtain, so
non-minimal query size is clearly essential to a drastic reduction in the number of queries needed.
This gives rise to the natural question as to what is the relation between query size and number of queries.
In particular, one natural question is whether queries of size less than $n$
would still allow the problem to be solved using a polynomial independent of $k$
number of queries. We show that this is possible; in Section~\ref{sec:query-size},
we introduce a range of
algorithms exhibiting an explicit tradeoff between the number of queries and
their size. Using a blocking scheme, we show that we can restrict to
$\varepsilon n$-linear decision trees, for arbitrary $\varepsilon$. We also give a
range of tradeoffs for $n^{1-\varepsilon}$-linear decision trees. Although the
proposed algorithms still involve nonconstant-size queries, this is
the first time such tradeoffs are explicitly tackled. Table~\ref{tab:results} summarizes our results.

\begin{table}
\centering
\begin{tabular}{|c|c|c|c|c|}
	\hline

	& \# blocks & query size & \# queries & time \\
	\hline
	Theorem~\ref{thm:cube} &
	$1$ &
	$n$ &
	$\tilde{O}(n^3)$ & $\tilde{O}(n^{\ceil{\frac{k}{2}}+8})$
	\\

	\hline

	Theorem~\ref{thm:query-size} &
	$b$ &
	$k\ceil{\frac{n}{b}}$ &
	$\tilde{O}(b^{k-4}n^3)$ &
	$\tilde{O}(b^{\floor{\frac{k}{2}}-9} n^{\ceil{\frac{k}{2}}+8})$
	\\

	\hline

	Corollary~\ref{cor:en} &
	$b=\frac{k}{\varepsilon}$ &
	$\varepsilon n$ &
	$\tilde{O}(n^3)$ &
	$\tilde{O}(n^{\ceil{\frac{k}{2}}+8})$
	\\

	\hline

	Corollary~\ref{cor:ne} &
	$b=O(n^\varepsilon)$ &
	$O(n^{1-\varepsilon})$ &
	$\tilde{O}(n^{3+(k-4)\varepsilon})$ &
	$\tilde{O}(n^{(1+\varepsilon)\frac{k}{2} +8.5})$
	\\

	\hline
\end{tabular}
\caption{Complexities of our new algorithms for the \kSUM{}\ problem. The query
size is the maximum number of elements of the input that can be involved in a
single linear query. The number of blocks is a parameter that allows us to
change the query size (see Section~\ref{sec:query-size}).
The origin of the constant in the exponent of the time complexity is due to
Lemma~\ref{lem:bound}. We conjecture it can be reduced, though substantial
changes in the analysis will likely be needed to do so.}
\label{tab:results}
\end{table}


\section{Definitions and previous work}

\subsection{Definitions}
We consider the \kSUM{} problem for \(k=O(1)\). In what follows, we use the
notation \([n] = \{\,1,2,\ldots ,n\,\}\).
\begin{problem}[\kSUM{}]
 Given an input vector \(q\in\mathbb{R}^n\), decide whether there exists a
 $k$-tuple \((i_1, i_2,\ldots ,i_k) \in {[n]}^k\) such that \(\sum_{j=1}^k
 q_{i_j} = 0\).
\end{problem}
The problem amounts to deciding in $n$-dimensional space, for each hyperplane
\(H\) of equation \(x_{i_1} + x_{i_2} + \cdots +x_{i_k} = 0\), whether \(q\)
lies on, above, or below \(H\). Hence this indeed amounts to locating the point
$q$ in the arrangement formed by those hyperplanes. We emphasize that the set
of hyperplanes depends only on $k$ and $n$ and not on the actual input vector
$q$.

Linear degeneracy testing (\kLDT{}) is a generalization of \kSUM{} where we
have arbitrary rational coefficients\footnote{The usual definition of \kLDT{}
allows arbitrary \emph{real} coefficients. However, the algorithm we provide
for Lemma~\ref{lem:multiple} needs the vertices of the arrangement of
hyperplanes to have rational coordinates.}
and an independent term in the equations
of the hyperplanes.
\begin{problem}[\kLDT{}]
 Given an input vectors \(q\in\mathbb{R}^n\) and
 $\alpha \in \mathbb{Q}^n$ and constant $c \in \mathbb{Q}$
 decide whether there exists a
 $k$-tuple \((i_1, i_2,\ldots ,i_k) \in {[n]}^k\) such that
 \(c + \sum_{j=1}^k \alpha_j q_{i_j} = 0\).
 \end{problem}
Our algorithms apply to this more general problem with only minor changes.

The \emph{\(s\)-linear decision tree model} is a standard model of computation
in which several lower bounds for \kSUM{}\ have been proven. In the decision tree
model, one may ask well-defined questions to an oracle that are answered
``yes'' or ``no.'' For $s$-linear decision trees, a well-defined question consists
of testing the sign of a linear function on at most \(s\) numbers \(q_{i_1},\ldots,q_{i_s}\) of the
input \(q_1,\ldots,q_n\) and can be written as
$$
	c + \alpha_1 q_{i_1} + \cdots + \alpha_s q_{i_s} \ask{\le} 0
$$
Each question is defined to cost a single unit. All other operations can be
carried out for free but may not examine the input vector $q$. We refer to
$n$-linear decision trees simply as linear decision trees.

In this paper, we consider algorithms in the standard integer RAM model, but in which
the input $q\in\mathbb{R}^n$ is accessible \emph{only} via a linear query oracle. Hence we
are not allowed to manipulate the input numbers directly. The complexity is measured
in two ways: by counting the total number of queries, just as in the linear decision tree
model, and by measuring the overall running time, taking into account the time required
to determine the sequence of linear queries. This two-track computation model,
in which the running time is distinguished from the query complexity, is commonly
used in results on comparison-based sorting problems where analyses of both
runtime and comparisons are of interest (see for
instance~\cite{SS95,CFJJM10,CFJJM13}).

\subsection{Previous Results.}
The seminal paper by Gajentaan and Overmars~\cite{GO95} showed the crucial role
of \threeSUM{} in understanding the complexity of several problems in
computational geometry. It is an open question whether an
$O(n^{2-\varepsilon})$ algorithm exists for \threeSUM{}. Such a result would
have a tremendous impact on many other fundamental computational
problems~\cite{GO95,BH99,MO01,P10,ACLL14,AVW14,GP14,KPP14,ALW15,AWY15,CL15}.
It has been known for long that \kSUM{} is $W[1]$-hard. Recently, it was shown
to be $W[1]$-complete by Abboud et al.~\cite{ALW15}.

In Erickson~\cite{E99}, it is shown that we cannot solve \threeSUM{} in
subquadratic time in the \(3\)-linear decision tree model:
\begin{theorem}[Erickson~\cite{E99}]
The optimal depth of a \(k\)-linear decision tree that solves
the \kLDT{} problem is $\Theta(n^{\ceil{\frac{k}{2}}})$.
\end{theorem}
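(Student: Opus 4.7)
The plan is to establish the upper and lower bounds separately, both matching $\Theta(n^{\lceil k/2 \rceil})$.

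For the upper bound, I would use the classical meet-in-the-middle algorithm cast in the $k$-linear decision tree model. Split the candidate $k$-tuple into a left part $L$ of size $\lfloor k/2 \rfloor$ and a right part $R$ of size $\lceil k/2 \rceil$. For each choice of $L$, the partial sum $c/2+\sum_{j\in L}\alpha_j q_{i_j}$ is a linear form on at most $\lceil k/2\rceil\le k$ input entries, and comparing two such partial sums (either from the same side or from opposite sides) is itself a $k$-linear query of the required form. Sorting the $\Theta(n^{\lfloor k/2\rfloor})$ left sums and, for each of the $\Theta(n^{\lceil k/2\rceil})$ right sums, binary-searching for its negation yields a valid $k$-linear decision tree of depth $O(n^{\lceil k/2\rceil}\log n)$. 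The logarithmic factor can be removed by a batched/fractional-cascading style argument specific to this model, giving the claimed $O(n^{\lceil k/2\rceil})$ depth.

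For the lower bound I would run Erickson's adversary/counting argument. The goal is to exhibit a family $\mathcal{F}$ of $\Omega(n^{\lceil k/2\rceil})$ pairwise distinguishable inputs, each a yes-instance witnessed by a distinct $k$-tuple $\sigma$, on which the decision tree must reach pairwise distinct leaves. One fixes a "background" $q\in\mathbb{R}^n$ whose coordinates are in sufficiently general position (for instance a fast-growing geometric sequence), and then perturbs one coordinate per witness $\sigma$ so that exactly the equation $c+\sum_{j}\alpha_j q_{i_j}=0$ holds. The key step is to show that any single $k$-linear query $c'+\sum_{j=1}^{s}\alpha'_j q_{i'_j}\ask{\le}0$ with $s\le k$ can be "informative" for only $O(n^{\lceil k/2\rceil-1})$ members of $\mathcal{F}$: whenever the set of indices touched by the query does not coincide with the witness $\sigma$, general position forces the sign of the query to be determined by the background alone and hence to be independent of $\sigma$. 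A straightforward counting of leaves then forces depth $\Omega(n^{\lceil k/2\rceil})$.

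The hard part is the lower bound, specifically the "informativeness" claim: a $k$-linear query is not restricted to the natural \kSUM{} or \kLDT{} hyperplanes, but is an arbitrary linear combination of $k$ coordinates, so one cannot simply count intersections with the arrangement. The delicate work is to choose the background and the per-witness perturbations to be generic enough that no unintended linear dependence arises among the coefficients $\alpha'_j$, the background values, and the perturbations; this is where Erickson's use of algebraic independence (or a sufficiently rich parameter family) does the heavy lifting, and any honest write-up must carry out this genericity verification carefully.
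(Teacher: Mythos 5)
This statement is cited from Erickson~\cite{E99}; the paper does not reprove it, and only offers a one-sentence geometric gloss (there exists a cell of the arrangement with $\Omega(n^{\lceil k/2\rceil})$ facets, and locating a point in such a cell in the $k$-linear model requires testing each facet). So there is no detailed proof in the paper to compare against, and any assessment has to be of the proposal on its own terms.

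Your upper-bound sketch is consistent with the folklore meet-in-the-middle construction, though the claim that the $\log n$ factor ``can be removed by a batched/fractional-cascading style argument specific to this model'' is an assertion, not an argument; as stated the standard decision tree only achieves $O(n^{\lceil k/2\rceil}\log n)$ for even $k$.

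The real problem is in the lower bound, and it is a genuine gap. You say that after establishing that any single $k$-linear query is ``informative'' for only $O(n^{\lceil k/2\rceil-1})$ members of a family $\mathcal{F}$ of size $\Omega(n^{\lceil k/2\rceil})$, ``a straightforward counting of leaves then forces depth $\Omega(n^{\lceil k/2\rceil})$.'' Leaf counting never gives a bound better than $\log_2|\mathcal{F}|=O(\log n)$, which is exponentially weaker than what is claimed; the number of distinguishable inputs bounds the number of leaves, not the depth directly. The correct tool is an adversary argument that follows a single root-to-leaf path, fixing the answers consistently with the background $q$, and tracks how many perturbed witnesses still travel that same path; the depth is then lower bounded by (size of $\mathcal{F}$) divided by (maximum number of witnesses a single query can peel off). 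But even under that correct framing, your numbers do not close: with $|\mathcal{F}|=\Omega(n^{\lceil k/2\rceil})$ and $O(n^{\lceil k/2\rceil-1})$ witnesses killed per query, the adversary argument yields only $\Omega(n)$, not $\Omega(n^{\lceil k/2\rceil})$. To get the stated bound one needs either a much smaller per-query loss (constant, for a family of size $\Theta(n^{\lceil k/2\rceil})$) or a much larger family with a matching per-query bound (for example $|\mathcal{F}|=\Theta(n^k)$ with $O(n^{\lfloor k/2\rfloor})$ killed per query, which happens to give $n^{\lceil k/2\rceil}$ for both parities). Pinning down exactly which combination Erickson establishes --- and verifying the genericity conditions that make it hold for arbitrary $k$-linear queries, not just the arrangement's own hyperplanes --- is precisely the content of his proof, and it cannot be waved away as ``straightforward counting.''
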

The proof uses an adversary argument which can be explained geometrically. As
we already observed, we can solve \kLDT{} problems by modeling them as point
location problems in an arrangement of hyperplanes. Solving one such problem
amounts to determining which cell of the arrangement contains the input point.
The adversary argument of Erickson~\cite{E99} is that there exists a cell having
$\Omega(n^{\ceil{\frac{k}{2}}})$ boundary facets and in this model point
location in such a cell requires testing each facet.

Ailon and Chazelle~\cite{AC05} study \(s\)-linear decision trees to solve the \kSUM{} problem when
\(s > k\). In particular, they give an additional proof for the
$\Omega(n^{\ceil{\frac{k}{2}}})$ lower bound of Erickson~\cite{E99} and
generalize the lower bound for the \(s\)-linear decision tree model when \(s >
k\). Note that the exact lower bound given by Erickson~\cite{E99} for \(s = k\) is
$\Omega({(nk^{-k})}^{\ceil{\frac{k}{2}}})$ while the one given by
Ailon and Chazelle~\cite{AC05} is $\Omega({(nk^{-3})}^{\ceil{\frac{k}{2}}})$. Their result
improves therefore the lower bound for \(s = k\) when \(k\) is large.
The lower bound they prove for \(s > k\) is the following
\begin{theorem}[Ailon and Chazelle~\cite{AC05}]
The depth of an $s$-linear decision tree solving the \kLDT{}\ problem is
$$
\Omega\mleft(\group{nk^{-3}}^{\frac{2k-s}{2\ceil{\frac{s-k+1}{2}}}
(1-\varepsilon_k) }\mright),
$$
where \(\varepsilon_k > 0\) tends to \(0\) as \(k \to\infty\).
\end{theorem}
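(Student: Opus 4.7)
The plan is to extend Erickson's geometric adversary argument to the case $s>k$. Recall that a \kLDT{} instance can be recast as point location in the arrangement $\mathcal{A}$ of the $O(n^k)$ hyperplanes in $\mathbb{R}^n$ defined by the $k$-sparse equations, and Erickson's lower bound for $s=k$ proceeds by exhibiting an adversary cell $C$ of $\mathcal{A}$ with many bounding facets, each of which must be individually certified because a $k$-linear query hyperplane must coincide (up to scaling) with one hyperplane of $\mathcal{A}$. When $s>k$, a query hyperplane $H$ is no longer required to lie in $\mathcal{A}$, and may help separate $C$ from several of its neighbors at once; the core of the proof is therefore a quantitative version of the following dichotomy: $C$ has many facets, but a single $s$-linear query cannot resolve too many of them.

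For the first half of the dichotomy, I would adapt Erickson's adversary to produce a cell with $\Omega\!\left((nk^{-3})^{\lceil k/2\rceil}\right)$ bounding facets, replacing Erickson's $k^{-k}$ slack by the weaker $k^{-3}$ via a more uniform probabilistic construction on random $k$-sparse coefficient patterns; this is what lets the Ailon--Chazelle bound dominate Erickson's for large $k$. For the second half, I would prove a combinatorial lemma bounding, for any fixed $s$-linear hyperplane $H$, the number of facets of $C$ that $H$ strictly separates from the adversary input. Informally, $H$ has $s$ nonzero coefficients; ``aligning'' $H$ with a resolved $k$-sparse facet consumes $k$ of these degrees of freedom, and the remaining $s-k$ parameterise a lower-dimensional subarrangement whose facet complexity is governed by a standard upper-envelope bound in dimension $s-k+1$, giving the denominator $\lceil (s-k+1)/2\rceil$ in the claimed exponent.

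Dividing the total facet count by this per-query capacity yields the lower bound, with the exponent $\frac{2k-s}{2\lceil(s-k+1)/2\rceil}$ emerging as the ratio of the two counts; the factor $(1-\varepsilon_k)$ absorbs lower-order corrections inherent to the probabilistic construction and tends to $0$ as $k\to\infty$. The hardest step, replacing Erickson's one-query-per-facet argument, is the per-query upper bound: formalising the right notion of ``compatibility'' between $H$ and the $k$-sparse facets of $C$, and then matching it to a sharp arrangement-complexity bound in the reduced $(s-k+1)$-dimensional picture. I would expect essentially all of the technical work to live there, since once the lemma is in hand, the adversary construction and the counting of queries are direct adaptations of Erickson's scheme.
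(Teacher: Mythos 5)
This theorem is quoted from Ailon and Chazelle~\cite{AC05}; the paper cites it without proof, so there is no in-paper argument to compare yours against. On its own terms, your outline points in the right general direction --- the Ailon--Chazelle bound is indeed obtained by pushing Erickson's adversary technique to the regime $s>k$ --- but it defers every load-bearing step. The crucial per-query lemma is stated only as an aspiration: you assert that the $s-k$ spare coefficients of a query hyperplane parameterise a structure whose complexity is governed by an upper-envelope bound in dimension $s-k+1$, but you give no argument for why a single $s$-linear query cannot be ``compatible'' with the adversary cell in many inequivalent ways, nor for why the controlling exponent should be $\lceil(s-k+1)/2\rceil$ rather than, say, $s-k+1$. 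The passage from ``many facets, bounded per-query resolution'' to a depth lower bound also needs more than a division of counts: a decision tree is not obliged to certify each facet of the cell separately, so one needs an adversary that maintains a large surviving region after every query, as in Erickson's original argument, rather than a static pigeonhole over facets. Finally, the claim that a ``more uniform probabilistic construction'' replaces Erickson's $k^{-k}$ slack by $k^{-3}$ is asserted without any construction. As written, the proposal is a plausible roadmap to the Ailon--Chazelle theorem, not a proof, and there is no proof in the paper for it to be measured against.
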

This lower bound breaks down when
\(k = \Omega(n^{\sfrac{1}{3}})\) or \(s \ge 2 k\) and the cases where \(k < 6\)
give trivial lower bounds. For example, in the case
of \threeSUM{} with \(s = 4\) we only get an $\Omega(n)$ lower bound.

As for upper bounds, Baran et al.~\cite{BDP08} gave subquadratic Las Vegas
algorithms for 3SUM on integer and
rational numbers in the circuit RAM, word RAM, external memory, and
cache-oblivious models of computation. The idea of their approach is to exploit
the parallelism of the models, using linear and universal hashing.


More recently, Gr{\o}nlund and Pettie~\cite{GP14} proved the existence of a linear decision tree
solving the \threeSUM{} problem using a strongly subquadratic number of linear queries.
The classical quadratic algorithm for \threeSUM{} uses \(3\)-linear queries
while the decision tree of Gr{\o}nlund and Pettie uses \(4\)-linear queries and
requires $O(n^{\sfrac{3}{2}} \sqrt{\log n})$ of them.
Moreover, they show that their decision tree can be used to get better upper
bounds for \kSUM{} when \(k\) is odd.

They also provide two subquadratic \threeSUM{}
algorithms. A deterministic one running in
$O(n^2/{(\log n/\log \log n)}^{\sfrac{2}{3}})$
time and a randomized one running in
$O(n^2 {(\log \log n)}^2 / \log n)$ time with high probability.
These results refuted the long-lived conjecture that
\threeSUM{} cannot be solved in subquadratic time in the RAM model.

Freund~\cite{F15} and Gold and Sharir~\cite{GS15} later gave improvements on the
results of Gr{\o}nlund and Pettie~\cite{GP14}. Freund~\cite{F15} gave a deterministic algorithm for
\threeSUM{} running in \(O( {n^2\log \log n}/{\log n})\) time.
Gold and Sharir~\cite{GS15} gave another deterministic algorithm for \threeSUM{}
with the same running time and shaved off the $\sqrt{\log n}$ factor in the
decision tree complexities of \threeSUM{} and \kSUM{} given by Gr{\o}nlund and Pettie.

Auf der Heide~\cite{A84} gave the first point location algorithm to solve the knapsack
problem in the linear decision tree model in polynomial time. He thereby
answers a question raised by Dobkin and Lipton~\cite{DL74,DL78}, Yao~\cite{Y82}
and others. However, if one uses this algorithm to locate a point in an
arbitrary arrangement of hyperplanes the running time is increased by a factor
linear in the greatest coefficient in the equations of all hyperplanes.
On the other hand, the complexity of Meiser's point location algorithm is
polynomial in the dimension, logarithmic in the number of hyperplanes and
does not depend on the value of the coefficients in the equations of the
hyperplanes. A useful complete description of the latter is also given by
Bürgisser et al.~\cite{B97} (Section 3.4).


\section{Query complexity}
\label{sec:query-complexity}

In this section and the next, we prove the following first result.
\begin{theorem}
\label{thm:cube}
There exist linear decision trees of depth at most \(O(n^3\log^3 n)\) solving
the \kSUM{} and the \kLDT{} problems. Furthermore, for the two problems there
exists an $\tilde{O}(n^{\ceil{\frac{k}{2}}+8})$ Las Vegas algorithm in
the RAM model expected to perform $O(n^3\log^3 n)$ linear queries on the input.
\end{theorem}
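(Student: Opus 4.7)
The plan is to specialize Meiser's point-location algorithm~\cite{M93} to the arrangement $\H{}$ of the $O(n^k)$ hyperplanes of equation $x_{i_1}+\cdots+x_{i_k}=0$ in $\mathbb{R}^n$. Since \kSUM{} amounts to deciding which cell of the arrangement contains the input $q$, it suffices to exhibit a polynomial-depth linear decision tree for this point-location problem together with a Las Vegas RAM procedure that implements it; the \kLDT{} case follows from the same scheme after rescaling the hyperplane equations.

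The point-location procedure is iterative. It maintains a full-dimensional simplex $\Delta$ known to contain $q$, together with the \emph{active set} $\H{}'\subseteq\H{}$ of hyperplanes that cross the interior of $\Delta$; initially $\Delta$ is a large bounding simplex and $\H{}'=\H{}$. One iteration does the following: (i) draw a random sample $\mathcal{N}\subseteq\H{}'$ of size $m=O(n\log n)$, which with high probability is a $(1/c)$-net of $\H{}'$ for a small constant $c$ by the classical Haussler--Welzl bound; (ii) construct the bottom-vertex triangulation of $\Delta$ refined by $\mathcal{N}$; (iii) use linear queries on $q$ to locate $q$ in one of its sub-simplices $\Delta'$; (iv) set $\Delta\leftarrow\Delta'$ and drop from $\H{}'$ every hyperplane that no longer crosses $\Delta'$. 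By the net property, $|\H{}'|$ shrinks by a constant factor per iteration, so after $O(\log|\H{}|)=O(\log n)$ iterations the active set is empty, $\Delta$ lies in a single cell of the arrangement, and the answer is determined.

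The depth bound is driven by step (iii). Descending the bottom-vertex triangulation one dimension at a time, at a face of dimension $d$ we spend $O(m)$ linear queries to decide through which lower-dimensional facet a ray from the bottom vertex toward $q$ leaves the current face; summed over the $n$ dimensions this is $O(nm)=O(n^2\log n)$ queries per iteration, hence $O(n^3\log^2 n)$ queries over all iterations. Amplifying the sample size to push the net-failure probability below $1/\mathrm{poly}(n)$ costs an additional $\log n$ factor and yields the announced $O(n^3\log^3 n)$ depth bound.

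For the running time, the only per-iteration cost not already covered by the query bound is step (iv): recomputing $\H{}'$ naively requires testing each of the $O(n^k)$ hyperplanes against $\Delta'$, for a total of $\tilde{O}(n^{k+2})$. The main obstacle, addressed in Section~\ref{sec:time-complexity}, will be to perform this bulk update in $\tilde{O}(n^{\ceil{\frac{k}{2}}+8})$ time: the plan is to recast the test ``does $H\in\H{}'$ still cross $\Delta'$?'' as a \kSUM{}-style feasibility question on the $O(n)$ vertices of $\Delta'$ and to solve it using the classical $O(n^{\ceil{\frac{k}{2}}})$ \kSUM{} algorithm, the extra polynomial slack absorbing the cost of iterating over vertices of $\Delta'$, over BVD levels, and over the $O(\log n)$ Meiser iterations. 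The delicate point is that the reductions underlying step (iv) must be implementable without touching $q$ directly, so that the $O(n^3\log^3 n)$ query bound established above is preserved.
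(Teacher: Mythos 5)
Your high-level plan is the same as the paper's: specialize Meiser's $\varepsilon$-net point-location scheme to the \kSUM{} arrangement. But the arithmetic in your query-complexity argument does not close, and the time-complexity argument is sketched in a way that skips the ideas that actually make the bound achievable.

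\paragraph{Query complexity.} You take the net size $m$ to be $O(n\log n)$. The relevant range space is not the family of halfspaces (VC dimension $\Theta(n)$) but the family of sets of hyperplanes of $\H$ crossed by a simplex in $\mathbb{R}^n$. A simplex is the Boolean combination of $n+1$ halfspace conditions, and the VC dimension of that family is polynomial in $n$ of degree (roughly) two, not one. Accordingly the paper samples $O(n^2\log^2 n)$ hyperplanes per round, not $O(n\log n)$. Moreover, even granting your $m$: you claim $O(nm)=O(n^2\log n)$ queries per iteration and then assert ``hence $O(n^3\log^2 n)$ over all iterations''; but you also (correctly) state there are only $O(\log|\H|)=O(\log n)$ iterations, which would give $O(n^2\log^2 n)$, a factor of $n$ short, and your further $\log n$ amplification does not recover it. With the correct net size $m=O(n^2\log^2 n)$ one gets $O(nm)=O(n^3\log^2 n)$ per iteration, and $O(\log n)$ iterations give the stated $O(n^3\log^3 n)$.

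\paragraph{Time complexity.} Your description --- ``recast the test as a \kSUM{}-style feasibility question on the $O(n)$ vertices of $\Delta'$ and use the classical $O(n^{\lceil k/2\rceil})$ algorithm, with polynomial slack absorbing the rest'' --- is too optimistic as stated, and misses where the real work is. If you keep a constant-factor net at every iteration, then even \emph{writing down} the surviving hyperplanes after the first pruning round already costs $\Theta(n^k)$, since a constant fraction of $|\H|=\Theta(n^k)$ hyperplanes remains; no per-vertex \kSUM{} trick fixes this. The paper avoids this by modifying only the \emph{first} iteration to use $\varepsilon=n^{-k/2}$, so that a single aggressive pruning reduces the active set to $O(n^{k/2})$. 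That pruning must then be done \emph{output-sensitively}, which is the content of the ``double $k$-SUM'' lemma: a hyperplane crosses the simplex iff it crosses one of its $O(n^2)$ edges, and the hyperplanes crossing a segment can be enumerated in time $\tilde{O}(n^{\lceil k/2\rceil}\log M + Z)$ by the meet-in-the-middle sorting-plus-sweep argument. Detecting hyperplanes \emph{containing} the simplex requires a separate ``multiple $k$-SUM'' reduction to an integer $k$-SUM instance via a positional encoding of the $d$ vertex coordinates. Finally, the exponent $+8$ is not slack absorbed by iterating over vertices or levels: it is dominated by the bit-length $\log M=\tilde{O}(n^5)$ of the simplex vertex coordinates (Lemma~\ref{lem:bound}), obtained by tracing Cramer's rule through the nested bottom-vertex constructions. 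Without this bit-complexity bound there is no control on the arithmetic cost and the claimed running time does not follow.

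You do correctly observe the crucial constraint that all pruning must be done without touching $q$, which is indeed preserved in the paper because the pruning operates on simplices whose vertices are data-independent vertices of arrangements; but as written your proposal leaves the two gaps above open.
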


\subsection{Algorithm outline}
For a fixed set of hyperplanes \(\H\) and given input vertex \(q\) in \(\R^n\),
Meiser's algorithm allows us to determine the cell of the arrangement
$\arrangement(\H)$ that contains $q$ in its interior (or that \emph{is} $q$ if
$q$ is a $0$-cell of $\arrangement(\H)$), that is, the positions $\sigma(H,q) \in
\signset$ of \(q\) with respect to all hyperplanes $H \in \H$. In the \kSUM{}
problem, the set $\H$ is the set of $\Theta(n^k)$ hyperplanes with equations of the
form $x_{i_1} + x_{i_2} + \cdots + x_{i_k} = 0$. 
These equations can be modified accordingly for \kLDT{}.

We use standard results on \enets{}. Using a theorem due to Haussler and
Welzl~\cite{H87}, it is possible to construct an \enet{} \(\NH\) for the range
space
defined by hyperplanes and simplices using a random uniform sampling on \(\H\).
\begin{theorem}[Haussler and Welzl~\cite{H87}]\label{thm:enet}
	If we choose \(O(\enetsize)\) hyperplanes of \(\H\) uniformly at
random and denote this selection \(\net\) then for
any simplex intersected by more than \(\varepsilon \card{\H{}}\) hyperplanes of
\(\H\), with high probability, at least one of the intersecting hyperplanes
is contained in \(\net\).
\end{theorem}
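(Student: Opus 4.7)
The plan is to cast the statement as an instance of the classical Haussler-Welzl $\varepsilon$-net theorem applied to a specific range space, and then verify that its hypothesis---bounded VC-dimension---holds here. Define the range space $(X,\mathcal{R})$ where $X=\H$ and, for each simplex $S\subseteq \mathbb{R}^n$, the associated range is
\[
R_S \;=\; \{\,H\in\H : H\cap S\neq\emptyset\,\}.
\]
In this language the desired conclusion is exactly that a uniform random sample $\net\subseteq\H$ of size $O(\enetsize)$ is, with high probability, an $\varepsilon$-net for $(X,\mathcal{R})$: it meets every range of cardinality larger than $\varepsilon|\H|$.

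Next I would bound the VC-dimension of $(X,\mathcal{R})$ by a quantity depending only on $n$. A simplex in $\mathbb{R}^n$ has at most $n+1$ vertices $v_1,\ldots,v_{n+1}$, and a hyperplane $H=\{x\in\mathbb{R}^n : \langle a,x\rangle = b\}$ crosses $S$ if and only if the signed distances $\langle a,v_i\rangle - b$ do not all share a common sign. Each such condition is a Boolean combination of $O(n)$ linear predicates in the $O(n)$ real parameters $(a,b)$ describing a point of $X$; by the standard composition lemma for VC-classes, the VC-dimension of $(X,\mathcal{R})$ is therefore polynomial in $n$ and independent of $|\H|$.

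With the VC-dimension bound in hand, I would invoke the Haussler-Welzl double-sampling argument. Draw a second independent uniform sample $\net'$ of size $m$ and symmetrise the bad event---that some range $R$ with $|R|\ge\varepsilon|\H|$ avoids $\net$---via a Chernoff-type concentration inequality, replacing it with the event that such an $R$ avoids $\net$ while containing $\Omega(\varepsilon m)$ elements of $\net'$. Conditioned on $\net\cup\net'$, reason in a combinatorial fashion: by Sauer-Shelah the number of distinct traces of $\mathcal{R}$ on $\net\cup\net'$ is at most the shatter function $\Phi_d(2m) = O(m^d)$, and for any fixed bad $R$ the probability that all of its $\ge\varepsilon m$ representatives in $\net\cup\net'$ fall in $\net'$ and none in $\net$ is exponentially small in $\varepsilon m$. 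A union bound therefore gives failure probability at most $O(m^d)\cdot 2^{-\Omega(\varepsilon m)}$, which is made smaller than any prescribed inverse polynomial by choosing $m = \Theta(\enetsize)$.

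The main obstacle I anticipate is the VC-dimension estimate: one must confirm that the range space of ``hyperplanes crossing a simplex'', expressed as a Boolean combination of $O(n)$ halfspace predicates in the $O(n)$-dimensional dual parameter space of hyperplanes, yields a VC-dimension that is genuinely polynomial (not exponential) in $n$, so that the sample size prescribed in the theorem really suffices. Once $d$ is controlled, the ``with high probability'' conclusion follows from the textbook proof of Haussler and Welzl applied as a black box, and no property of $\H$ beyond its being the ambient ground set is used.
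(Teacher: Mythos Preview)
This theorem is not proved in the paper at all: it is quoted from Haussler and Welzl~\cite{H87} and used as a black box. The paper merely states it, notes the contrapositive, and then invokes it with a constant~$\varepsilon$ and a sample of size $O(n^2\log^2 n)$ in Algorithm~\ref{algo:meiser}.

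Your sketch is essentially the original Haussler--Welzl proof, so in that sense it agrees with what the paper is citing rather than diverging from it. The dualisation you describe is the right one: parametrise a hyperplane by $(a,b)\in\mathbb{R}^{n+1}$, observe that ``$H$ meets the simplex with vertices $v_1,\ldots,v_{n+1}$'' is the negation of ``all $\langle a,v_i\rangle-b$ have the same strict sign'', hence a Boolean combination of $n+1$ halfspace predicates in $(a,b)$. By the standard composition bound this gives VC-dimension $O(n^2\log n)$, not merely ``polynomial in $n$''; that specific order is what makes the $\varepsilon$-net size $O\!\bigl(\tfrac{d}{\varepsilon}\log\tfrac{d}{\varepsilon}\bigr)$ come out to $O(n^2\log^2 n)$ for constant $\varepsilon$, matching the paper's later use. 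Once that is pinned down, the double-sampling plus Sauer--Shelah union-bound argument you outline is exactly the textbook proof, and your identification of the VC-dimension estimate as the only nontrivial ingredient is accurate.
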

The contrapositive states that if no hyperplane in \(\net\) intersects
a given simplex, then with high probability the number of hyperplanes of
\(\H\) intersecting the simplex is at most \(\varepsilon \card{\H{}}\).

We can use this to design a prune and search algorithm as follows:
(1) construct an \enet{} \(\net\),
(2) compute the cell \(\cell\) of \(\arrangement(\net)\) containing the input
point $q$ in its interior,
(3) construct a simplex \(\simplex\) inscribed in \(\cell\) and containing
\(q\) in its interior,
(4) recurse on the hyperplanes of \(\H\) intersecting the interior of
\(\simplex\).

Proceeding this way with a constant $\varepsilon$ guarantees that at most a
constant fraction \(\varepsilon\) of the hyperplanes remains after the pruning step,
and thus the cumulative number of queries made to determine the enclosing cell at
each step is $O(n^2 \log^2 n \log \card{\H{}})$ when done in a brute-force way.
However, we still need to explain how to find a simplex \(\simplex\) inscribed
in \(\cell\) and containing \(q\) in its interior. This procedure corresponds to the well-known
\emph{bottom vertex decomposition} (or \emph{triangulation}) of a hyperplane
arrangement~\cite{GO04,Cla88}.

\subsection{Finding a simplex}

In order to simplify the exposition of the algorithm, we assume, without
loss of generality, that the input numbers $q_i$ all lie in the interval $[-1,1]$.
This assumption is justified by observing that we can normalize all the input
numbers by the largest absolute value of a component of $q$. One can then see that
every linear query on the normalized input can be implemented
as a linear query on the original input. A similar transformation can be carried out
for the \kLDT{}\ problem.
This allows us to use bounding hyperplanes of equations $x_i = \pm 1, i\in [n]$.
We denote by $\B$ this set of hyperplanes. Hence, if we choose a subset
\(\net\) of the hyperplanes, the input point is located in a bounded cell
of the arrangement \(\arrangement(\net \cup \B)\). Note that \(\card{\net \cup
\B} = O(\card{\net})\) for all interesting values of \(\varepsilon\).


We now explain how to construct \(\simplex\) under this assumption. The algorithm
can be sketched as follows. (Recall that $\sigma(H,p)$ denotes the relative position
of $p$ with respect to the hyperplane $H$.)

\begin{algorithm}[Constructing \(\simplex\)]\label{algo:simplex}
\item[input] A point \(q\) in ${[-1,1]}^n$, a set $\I$ of hyperplanes not
	containing \(q\), and a set $\E$ of hyperplanes in general position
	containing \(q\), such that the cell
	$$
	\cell = \{\,p \sut \sigma(H,p) = \sigma(H,q)
			\ \text{or}\ \sigma(H,p) = 0
			\ \text{for all}\ H \in (\I \cup \E)
		\,\}
	$$
	is a bounded polytope. The value \(\sigma(H,q)\) is known for
	all \(H \in (\I \cup \E)\). 
\item[output] A simplex \(\simplex \in \cell\) that contains \(q\) in
	its interior (if it is not a point), and all vertices
	of which are vertices of \(\cell\).
\item[0.] If $\card{\E}=n$, return $q$.
\item[1.] Determine a vertex \(\nu\) of $\cell$.
\item[2.] Let \(q'\) be the projection of \(q\) along \(\vec{\nu q}\) on the
	boundary of \(\cell\). Compute \(\I_\theta \subseteq \I\), the subset of
	hyperplanes in \(\I\) containing \(q'\). Compute \(\I_{\tau} \subseteq
	\I_{\theta}\), a maximal subset of those hyperplanes such that \(\E' = \E
	\cup \I_\tau\) is a set of hyperplanes in general position.
\item[3.] Recurse on \(q'\), \(\I' = \I \setminus \I_\theta\), and \(\E'\), and
	store the result in \(\simplex'\).
\item[4.] Return $\simplex$, the convex hull of \(\simplex' \cup \enum{\nu}\).
\end{algorithm}

Step \step{0} is the base case of the recursion: when there is only one point left, just return
that point. This step uses no query.

We can solve step \step{1} by using linear programming with the known values
of \(\sigma(H,q)\) as linear constraints. We arbitrarily choose an
objective function with a gradient non-orthogonal to all hyperplanes in
\(\I\) and look for the optimal solution. The optimal solution being a vertex of the arrangement,
its coordinates are independent of \(q\), and thus this step involves no query at all.

Step \step{2} prepares the recursive step by finding the hyperplanes containing
\(q'\). This can be implemented as a ray-shooting algorithm that performs
a number of comparisons between projections of $q$ on different hyperplanes of $\I$ without
explicitly computing them. In~Appendix~\ref{app:keeplinear}, we prove that all such comparisons
can be implemented using \(O(\card{\I})\) linear queries.
Constructing \(\E'\) can be done by solving systems of linear
equations that do not involve \(q\).

In step \step{3}, the input conditions are satisfied, that is, $q' \in
{[-1,1]}^n$, \(\I'\) is a set of hyperplanes not containing \(q'\), \(\E'\) is
a set of hyperplanes in general position containing \(q'\), \(\cell'\) is a
$d$-cell of \(\cell\) and is thus a bounded polytope. The value \(\sigma(H,q')\)
differs from \(\sigma(H,q)\) only for hyperplanes that have been removed from
\(\I\), and for those \(\sigma(H,q') = 0\), hence we know all necessary values
\(\sigma(H,q')\) in advance.

Since \(\card{\I'} < \card{\I}\), \(\card{\E'} > \card{\E}\), and
\(\card{\I\setminus\I'} - \card{\E'\setminus\E} \ge 0\), the complexity of the
recursive call is no more than that of the parent call, and the maximal depth
of the recursion is \(n\). Thus, the total number of
linear queries made to compute \(\simplex\) is \(O(n\card{\I})\).

Hence given an input point \(q \in [-1,1]\), an arrangement of hyperplanes
\(\arrangement(\net)\), and the value of \(\sigma(H,q)\) for all
\(H \in (\net\cup\B)\), we can compute the desired simplex \(\simplex\)
by running Algorithm~\ref{algo:simplex} on \(q\),
\(\I=\{\,H \in (\net\cup\B) \sut \sigma(H,q) \neq 0\,\}\), and
\(\E \subseteq (\net\cup\B) \setminus\I\).
This uses $O(n^3 \log^2 n)$ linear queries.
Figure~\ref{fig:meiser:step} illustrates a step of the algorithm.

\begin{figure}
\begin{center}
\includegraphics[trim=90 47 50 13,clip=true,height=0.25\textheight]{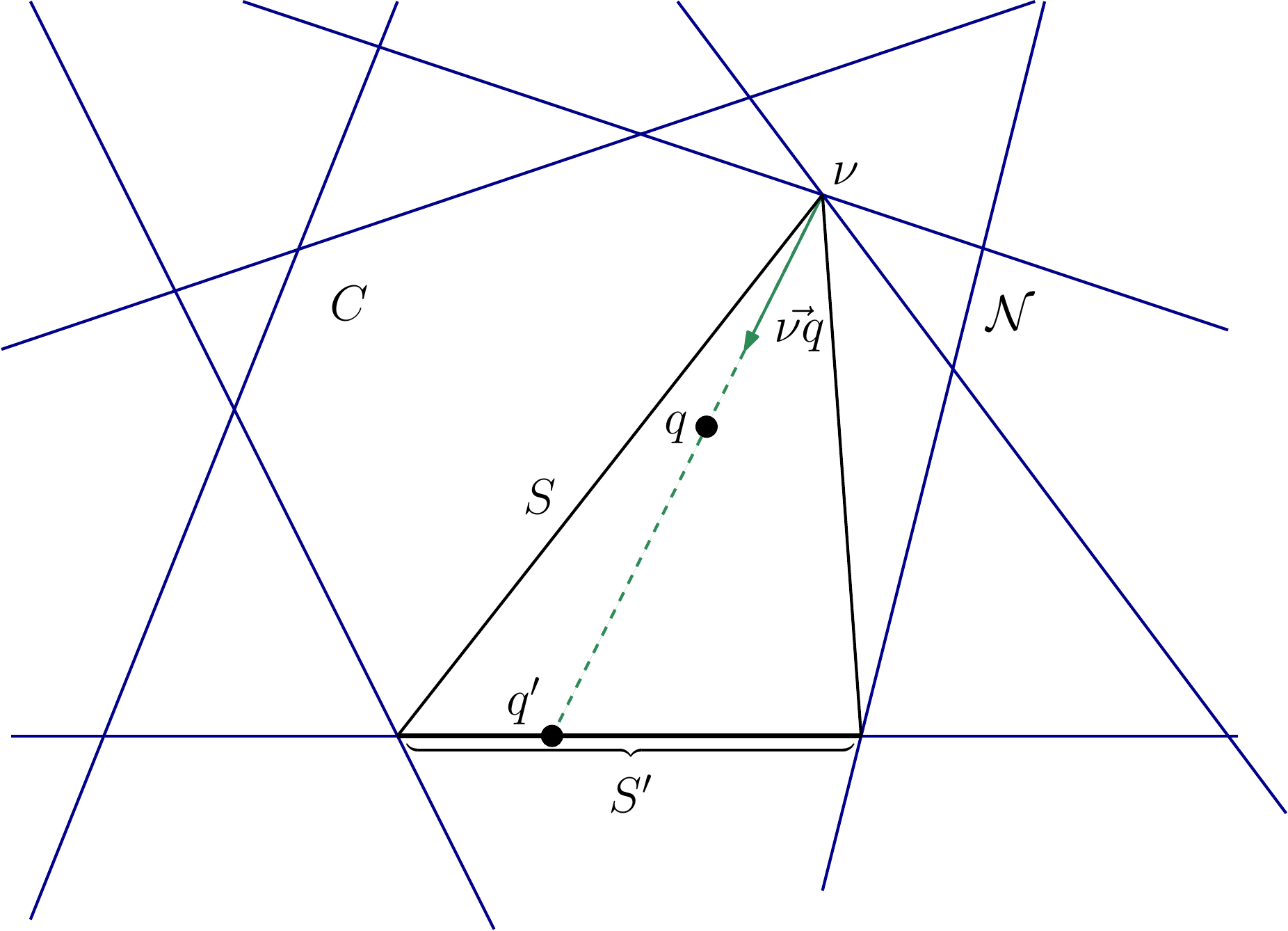}
\caption{%
Illustration of a step of Algorithm~\ref{algo:simplex}.
}
\label{fig:meiser:step}
\end{center}
\end{figure}

\subsection{Assembling the pieces}

Let us summarize the algorithm
\begin{algorithm}\label{algo:meiser}
\item[input] \(q \in {[-1,1]}^n\)
\item[1.] Pick \(O(n^2 \log^2 n)\) hyperplanes of $\H$ at random and locate $q$
in this arrangement. Call $\cell$ the cell containing $q$.
\item[2.] Construct the simplex \(\simplex\) containing \(q\) and inscribed in
\(\cell\), using Algorithm~\ref{algo:simplex}.
\item[3.] For every hyperplane of $\H$ containing $\simplex$, output a solution.
\item[4.] Recurse on hyperplanes of $\H$ intersecting the interior of $\simplex$.
\end{algorithm}

The query complexity of step \step{1} is $O(n^2 \log^2 n)$, and that of step
\step{2} is $O(n^3 \log^2 n)$. Steps \step{3} and \step{4} do not involve any
query at all. The recursion depth is $O(\log \card{\H})$, with $|\H|=O(n^k)$,
hence the total query complexity of this algorithm is $O(n^3 \log^3 n)$. This
proves the first part of Theorem~\ref{thm:cube}.\\

We can also consider the overall complexity of the algorithm in the RAM model,
that is, taking into account the steps that do not require any query, but for
which we still have to process the set $\H$. Note that the complexity
bottleneck of the algorithm are steps \step{3}-\step{4}, where we need to prune
the list of hyperplanes according to their relative positions with respect to
$\simplex$. For this purpose, we simply maintain explicitly the list of all
hyperplanes, starting with the initial set corresponding to all $k$-tuples.
Then the pruning step can be performed by looking at the position of each
vertex of $\simplex$ relative to each hyperplane of $\H$.
Because in our case hyperplanes have only $k$ nonzero coefficients, this uses a
number of integer arithmetic operations on $\tilde{O}(n)$ bits integers that is
proportional to the number of vertices times the number of hyperplanes.
(For the justification of the bound on the number of bits needed to represent
vertices of the arrangement see Appendix~\ref{app:bound}.)
Since we recurse on a fraction of the set, the overall complexity is
$\tilde{O}(n^2\card{\H{}}) = \tilde{O}(n^{k+2})$. The next section is devoted
to improving this running time.


\section{Time complexity}
\label{sec:time-complexity}

Proving the second part of Theorem~\ref{thm:cube} involves efficient implementations of
the two most time-consuming steps of Algorithm~\ref{algo:meiser}.
In order to efficiently implement the pruning
step, we define an intermediate problem, that we call the \emph{double $k$-SUM}
problem.
\begin{problem}[double $k$-SUM]
	Given two vectors $\nu_1, \nu_2 \in {[-1,1]}^n$, where the coordinates of
	$\nu_i$ can be written down as fractions whose numerator and denominator
	lie in the interval $[-M,M]$, enumerate all
	$i\in {[n]}^k$ such that
	$$
	\left(\sum_{j=1}^{k} \nu_{1,i_j}\right)
	\left(\sum_{j=1}^{k} \nu_{2,i_j}\right)
	< 0.
	$$
\end{problem}

In other words, we wish to list all hyperplanes of $\H$
intersecting the open line segment $\nu_1\nu_2$.
We give an efficient output-sensitive algorithm for this problem.

\begin{lemma}\label{lem:double}
	The double $k$-SUM problem can be solved in time $O(n^{\lceil \frac{k}{2}
	\rceil} \log n \log M + Z)$, where $Z$ is the size of the solution.
\end{lemma}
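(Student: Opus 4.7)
The approach is the classical meet-in-the-middle split. Set $a = \lceil k/2 \rceil$ and $b = \lfloor k/2 \rfloor$, and identify every $k$-tuple $i \in {[n]}^k$ with the pair $(A,B)$ consisting of its first $a$ and last $b$ coordinates. Writing $s_\ell(A) = \sum_{j=1}^{a} \nu_{\ell,i_j}$ and $s_\ell(B) = \sum_{j=a+1}^{k}\nu_{\ell,i_j}$, the product to be tested factors as
$$
(s_1(A)+s_1(B))\,(s_2(A)+s_2(B)),
$$
so reporting the negative-product $k$-tuples is the same as reporting the pairs $(A,B)$ for which $s_1(A)+s_1(B)$ and $s_2(A)+s_2(B)$ have strictly opposite signs. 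The first step is to enumerate all $n^a$ $a$-tuples and all $n^b$ $b$-tuples and precompute their sum pairs; each sum is a rational with numerator and denominator of bit-length $O(k \log M) = O(\log M)$, so this phase takes $O(n^a \log M)$ time.

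Geometrically I view each $a$-tuple as a point $(s_1(A), s_2(A))$ in the plane and each $b$-tuple as a query point $(-s_1(B), -s_2(B))$. The condition on $(A,B)$ is precisely that the $a$-point lies in one of the two open opposite quadrants around the query point (upper-left or lower-right). This is a two-dimensional orthogonal dominance reporting instance on $n^a$ points with $n^b$ queries. I would use the standard persistent construction: sort the $a$-points by $s_1$ and, over this linear order, build a persistent balanced BST keyed by $s_2$. Preprocessing costs $O(n^a \log n)$ comparisons; each of the $n^b$ queries is answered by a predecessor search in $s_1$ followed by two range reports in the corresponding persistent BST, using $O(\log n + z)$ comparisons when $z$ pairs are returned.

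Charging $O(\log M)$ per comparison and per rational arithmetic operation, and summing $\sum_B z_B = Z$, the whole procedure runs in
$$
O\bigl((n^a + n^b)\log n \log M + Z\bigr) \;=\; O\bigl(n^{\lceil k/2\rceil} \log n \log M + Z\bigr),
$$
which is exactly the stated bound (understanding $Z$ as the size of the output in the RAM model, so that each reported pair accounts for its own write cost). The only delicate point is the bit-complexity bookkeeping: one must show that after bringing at most $k$ inputs to a common denominator of size $O(M^k)$, every comparison, insertion, and range split in the persistent BST can indeed be executed in $O(\log M)$ word operations. Aside from that routine check, the lemma is a direct instantiation of meet-in-the-middle composed with planar dominance reporting.
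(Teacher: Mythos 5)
Your proof is correct and follows essentially the same meet-in-the-middle reduction as the paper: split each $k$-tuple into a $\lceil k/2\rceil$-half and a $\lfloor k/2\rfloor$-half, observe that the negative-product condition is exactly ``opposite signs'' on the two coordinates of the half-sum pair, and solve the resulting planar two-opposite-quadrant reporting problem in time $O(n^{\lceil k/2\rceil}\log n\log M + Z)$. The only difference is in the data structure used to answer the planar reporting queries: the paper sweeps one coordinate in sorted order while maintaining a single dynamic balanced BST keyed on the other coordinate (exploiting the monotone ``staircase'' so that insertions and queries interleave), whereas you build a persistent BST over the sorted $a$-points and look up the right version for each $b$-query --- both are standard offline 2D dominance-reporting implementations with identical complexity, so the two write-ups are interchangeable.
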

\begin{proof}
	If $k$ is even, we consider all possible $\frac{k}{2}$-tuples of numbers in
	$\nu_1$ and $\nu_2$ and sort their sums in increasing order. This takes time
	$O(n^{\frac{k}{2}} \log n)$ and yields two permutations $\pi_1$ and $\pi_2$
	of $[n^\frac{k}{2}]$.
	If $k$ is odd, then we sort both the $\lceil\frac{k}{2}\rceil$-tuples and
	the $\lfloor\frac{k}{2}\rfloor$-tuples. For simplicity, we will only
	consider the even case in what follows. The odd case carries through.

	We let $N = n^\frac{k}{2}$. For $i \in [N]$ and $m \in \{1,2\}$, let
	$\Sigma_{m,i}$ be the sum of the $\frac{k}{2}$ components of the $i$th
	$\frac{k}{2}$-tuple in $\nu_m$, in the order prescribed by $\pi_m$.

	We now consider the two $N \times N$ matrices $M_1$ and $M_2$ giving all
	possible sums of two $\frac{k}{2}$-tuples, for both $\nu_1$ with the ordering
	$\pi_1$ and $\nu_2$ with the ordering $\pi_2$.

	We first solve the $k$-SUM problem on $\nu_1$, by finding the sign of all
	pairs $\Sigma_{1,i} + \Sigma_{1,j}$, $i, j \in [N]$. This can be done in
	time $O(N)$ by parsing the matrix $M_1$, just as in the standard \kSUM{} algorithm.
        We do the same with $M_2$.

	The set of all indices $i, j \in [N]$ such that $\Sigma_{1,i} +
	\Sigma_{1,j}$ is positive forms a staircase in $M_1$. We sweep $M_1$ column
	by column in order of increasing $j \in [N]$, in such a way that the number
	of indices $i$ such that $\Sigma_{1,i} + \Sigma_{1,j} > 0$ is growing.
	For each new such value $i$ that is encountered during the sweep, we insert
	the corresponding $i' = \pi_2(\pi_1^{-1}(i))$ in a balanced binary search
	tree.

	After each sweep step in $M_1$ --- that is, after incrementing $j$ and
	adding the set of new indices $i'$ in the tree --- we search the tree to
	identify all the indices $i'$ such that $\Sigma_{2,i'} + \Sigma_{2,j'} <
	0$, where $j' = \pi_2(\pi_1^{-1}(j))$. Since those indices form an interval
	in the ordering $\pi_2$ when restricted to the indices in the tree, we can
	search for the largest $i_0'$ such that $\Sigma_{2,i_0'} < -\Sigma_{2,j'}$
	and retain all indices $i' \le i_0'$ that are in the tree. If we denote by
	$z$ the number of such indices, this can be done in $O(\log N + z) = O(\log
	n + z)$ time. Now all the pairs $i', j'$ found in this way are such that
	$\Sigma_{1,i} + \Sigma_{1,j}$ is positive and $\Sigma_{2,i'} +
	\Sigma_{2,j'}$ is negative, hence we can output the corresponding
	$k$-tuples. To get all the pairs $i', j'$ such that
	$\Sigma_{1,i} + \Sigma_{1,j}$ is negative and $\Sigma_{2,i'} +
	\Sigma_{2,j'}$ positive, we repeat the sweeping algorithm after swapping the
	roles of $\nu_1$ and $\nu_2$.

	Every matching $k$-tuple is output exactly once, and every
	$\frac{k}{2}$-tuple is inserted at most once in the binary search tree.
	Hence the algorithm runs in the claimed time.

	Note that we only manipulate rational numbers that are the sum of at most
	$k$ rational numbers of size $O(\log M)$.
\end{proof}

Now observe that a hyperplane intersects the interior of a simplex if and only
if it intersects the interior of one of its edges. Hence given a simplex
$\simplex$ we can find all hyperplanes of $\H$ intersecting its interior by
running the above algorithm ${n\choose 2}$ times, once for each pair of
vertices $(\nu_1,\nu_2)$ of $\simplex$, and take the union of the solutions.
The overall running time for this implementation will therefore be
$\tilde{O}(n^2 (n^{\lceil \frac k2 \rceil} \log M + Z))$, where $Z$ is at most the
number of intersecting hyperplanes and $M$ is to be determined later.
This provides an implementation of the pruning step in Meiser's algorithm, that
is, step~\step{4} of Algorithm~\ref{algo:meiser}.

\begin{corollary}\label{cor:double}
	Given a simplex $\simplex$, we can compute all $k$-SUM hyperplanes
	intersecting its interior in $\tilde{O}(n^2 (n^{\lceil \frac k2 \rceil}
	\log M + Z))$ time, where $M$ is proportional to the number of bits
	necessary to represent $\simplex$.
\end{corollary}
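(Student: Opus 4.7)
The plan is to reduce the corollary to $O(n^2)$ applications of Lemma~\ref{lem:double}, one per edge of $\simplex$. The underlying geometric fact is that a hyperplane $H$ meets the interior of the simplex $\simplex$ if and only if $H$ strictly separates the endpoints of at least one edge of $\simplex$. The ``if'' direction is immediate, since the relative interior of any edge lies in the interior of $\simplex$. For the converse, if $H$ cuts the interior of $\simplex$, then $H$ cannot contain $\simplex$, so the two open half-spaces of $H$ each meet $\simplex$; writing any such point as a convex combination of vertices of $\simplex$ forces a vertex strictly on each side of $H$, and the edge joining any such pair is strictly crossed by $H$.

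Armed with this, I would iterate over the $O(n^2)$ edges of $\simplex$ and, for each pair of endpoints $(\nu_1,\nu_2)$, invoke Lemma~\ref{lem:double} on $\nu_1,\nu_2$. The call returns exactly the $k$-SUM hyperplanes for which
$$
\left(\sum_{j=1}^k \nu_{1,i_j}\right)\left(\sum_{j=1}^k \nu_{2,i_j}\right) < 0,
$$
i.e.\ those strictly separating $\nu_1$ from $\nu_2$, at cost $O(n^{\lceil k/2 \rceil} \log n \log M + Z_e)$, where $Z_e$ is the number of hyperplanes reported for that edge. The bit-length hypothesis of Lemma~\ref{lem:double} on the $\nu_i$ is satisfied because the vertices of $\simplex$ arise as vertices of the arrangement $\arrangement(\net\cup\B)$ built by Algorithm~\ref{algo:simplex}, and the standard Cramer-type bound on the bit-length of their rational coordinates is the $O(\log M)$ bound developed in Appendix~\ref{app:bound}.

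To finish, I would merge the $O(n^2)$ output lists into one deduplicated list of hyperplanes: a single hyperplane may appear for several edges, but at most $\binom{n+1}{2}$ times, and removing duplicates via a hash table keyed on the defining $k$-tuple costs amortized constant overhead per reported element. Summing the per-edge cost over all edges then gives
$$
\tilde{O}\bigl(n^2 \cdot n^{\lceil k/2\rceil}\log M\bigr) + \tilde{O}\bigl(\sum_e Z_e\bigr) = \tilde{O}\bigl(n^2(n^{\lceil k/2 \rceil}\log M + Z)\bigr),
$$
which is the target bound. The only genuine content is the geometric equivalence above; after that the argument is an assembly of Lemma~\ref{lem:double} with standard bookkeeping, and I do not foresee any serious obstacle beyond carefully tracking the parameter $M$ through the subroutine of Lemma~\ref{lem:double}.
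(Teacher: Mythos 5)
Your proof is correct and follows essentially the same route as the paper: reduce to the geometric fact that a hyperplane cuts the interior of the simplex iff it strictly separates the endpoints of some edge, then invoke Lemma~\ref{lem:double} once per $O(n^2)$ edge and take the (deduplicated) union. The paper simply states the geometric equivalence and the $\binom{n}{2}$-fold application without elaboration; your added justification of the equivalence and the bookkeeping on $\sum_e Z_e$ are consistent with what the paper leaves implicit.
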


In order to detect solutions in step~\step{3} of Algorithm~\ref{algo:meiser}, we also
need to be able to quickly solve the following problem.
\begin{problem}[multiple $k$-SUM]
	Given $d$ points $\nu_1,\nu_2,\ldots,\nu_d \in \R^n$, where the coordinates of
	$\nu_i$ can be written down as fractions whose numerator and denominator
	lie in the interval $[-M,M]$,
	decide whether there exists a hyperplane with equation of the form $x_{i_1}
	+ x_{i_2} + \cdots + x_{i_k} = 0$ containing all of them.
\end{problem}

Here the standard $k$-SUM algorithm can be applied, taking advantage of
the fact that the coordinates lie in a small discrete set.
\begin{lemma}\label{lem:ksum}
	$k$-SUM on $n$ integers $\in [-V,V]$ can be solved in time
	$\tilde{O}(n^{\ckt}\log V)$.
\end{lemma}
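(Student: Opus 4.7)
The plan is to follow the standard meet-in-the-middle approach for $k$-SUM, and simply account for the bit complexity of arithmetic on integers bounded in absolute value by $kV$.

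First I would split the $k$ summands into two groups of sizes $\lceil k/2 \rceil$ and $\lfloor k/2 \rfloor$. I would enumerate all $\lceil k/2 \rceil$-tuples of input indices; there are $O(n^{\lceil k/2 \rceil})$ of them, and for each tuple I can compute the sum of its coordinates in $O(\log V)$ time, since the sum is an integer in $[-kV,kV]$, hence has bit-length $O(\log V)$ for constant $k$, and a constant number of integer additions on such numbers costs $\tilde{O}(\log V)$. I would store the sums in an array together with the tuples that produce them and sort the array by sum value using a comparison-based sort: this uses $O(n^{\lceil k/2 \rceil} \log n)$ comparisons, each of cost $O(\log V)$, for a total of $\tilde{O}(n^{\lceil k/2 \rceil} \log V)$.

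Next, I would enumerate all $\lfloor k/2 \rfloor$-tuples, of which there are $O(n^{\lfloor k/2 \rfloor})$, compute each corresponding sum in $\tilde{O}(\log V)$ time, and binary search the sorted array for the negation of that sum. Each search uses $O(\log n)$ comparisons of $O(\log V)$-bit integers, and is therefore of cost $\tilde{O}(\log V)$; summed over all tuples this contributes $\tilde{O}(n^{\lfloor k/2 \rfloor} \log V)$. A $k$-SUM solution exists if and only if some search succeeds, so the overall running time is dominated by the sort and is $\tilde{O}(n^{\lceil k/2 \rceil} \log V)$, as claimed.

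There is no real obstacle; the only thing to be careful about is that arithmetic operations and comparisons are charged $\tilde{O}(\log V)$ rather than $O(1)$, so that the $\log V$ factor in the stated bound is accounted for, but the exponent of $n$ remains $\lceil k/2 \rceil$ since the bit-length of all intermediate sums is $O(\log V)$ for constant $k$.
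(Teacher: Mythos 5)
Your proof is correct and is precisely the ``standard $k$-SUM algorithm'' that the paper invokes without spelling out: the paper states this lemma with only the remark that the classical meet-in-the-middle algorithm applies once one charges $\tilde{O}(\log V)$ per arithmetic operation on the $O(\log V)$-bit intermediate sums, which is exactly what you do. Your write-up simply supplies the details the paper leaves implicit, with the sort dominating at $\tilde{O}(n^{\lceil k/2\rceil}\log V)$, matching the stated bound.
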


\begin{lemma}\label{lem:multiple}
	Multiple $k$-SUM can be solved in time $\tilde{O}(dn^{\ckt+2}\log M)$.
\end{lemma}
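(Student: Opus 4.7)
The plan is to reduce the multiple $k$-SUM problem to a single instance of integer $k$-SUM that can be fed into Lemma~\ref{lem:ksum}. The core idea is to encode the $d$ rational coordinates of each input vector into a single integer $w_i$ in such a way that $\sum_{\ell=1}^k w_{i_\ell} = 0$ if and only if, for every equation $j \in [d]$, $\sum_{\ell=1}^k \nu_{j,i_\ell} = 0$. Then the existence of a $k$-SUM hyperplane containing all $d$ points is equivalent to the existence of a zero $k$-sum among the $w_i$'s, and Lemma~\ref{lem:ksum} settles the decision.

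I would build the encoding in two stages. First, for each coordinate $j$, clear denominators by multiplying every $\nu_{j,\cdot}$ by the product $Q_j = \prod_{i'=1}^n q_{j,i'}$, producing integers $u_{j,i} = p_{j,i} \prod_{i' \ne i} q_{j,i'}$ with $|u_{j,i}| \le M^n$ and satisfying $\sum_\ell \nu_{j,i_\ell} = 0 \iff \sum_\ell u_{j,i_\ell} = 0$. Second, pack the $d$ integer coordinates using a base $B$: set $w_i := \sum_{j=1}^d B^{j-1} u_{j,i}$. Choosing $B > 2k\,\max_{j,i}|u_{j,i}|$, i.e.\ $B = \Theta(k M^n)$, guarantees that in any sum of $k$ of the $w_i$'s the contributions attached to different powers of $B$ cannot cancel each other, so zero sums in the scalar encoding correspond exactly to simultaneous zero sums in every coordinate. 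The resulting integers satisfy $|w_i| \le d\,B^d = M^{O(nd)}$, giving $\log V = O(nd \log M)$.

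Applying Lemma~\ref{lem:ksum} to the encoded integers $w_1,\ldots,w_n$ gives runtime $\tilde{O}(n^{\ckt} \log V) = \tilde{O}(n^{\ckt}\cdot nd \log M)$, and folding in the preprocessing cost of constructing and manipulating the encoded integers (each of bit-size $O(nd \log M)$, so $\tilde{O}(n \cdot nd \log M)$ just to write them down, plus further factors from moving these large integers through the comparisons and sorts implicit in Lemma~\ref{lem:ksum}) yields the claimed bound $\tilde{O}(dn^{\ckt+2}\log M)$.

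The main obstacle I expect is the calibration of the base $B$: it must be chosen large enough to rule out spurious cross-coordinate cancellations yet small enough that the resulting $\log V$ stays $\mathrm{poly}(n,d,\log M)$, and one must verify that the denominator-clearing step does not inflate the $u_{j,i}$ beyond $M^{O(n)}$. A secondary concern is the bookkeeping needed to propagate the encoded integer sizes through Lemma~\ref{lem:ksum}; a more careful analysis (or a direct $d$-dimensional lex-sort adaptation of the $k$-SUM algorithm, bypassing the packing entirely) should shave the exponent, but the packing reduction already suffices for the stated bound.
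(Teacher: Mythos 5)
Your approach is essentially the same as the paper's: reduce multiple $k$-SUM to a single integer $k$-SUM instance by packing the $d$ per-point residuals into one big integer, then invoke Lemma~\ref{lem:ksum}. The two cosmetic differences: (1) the paper clears denominators globally, multiplying every entry by the product of all $dn$ denominators to obtain $\zeta_{i,j}$ with $|\zeta_{i,j}|\le M^{dn}$, whereas you clear denominators row by row to get $|u_{j,i}|\le M^{n}$ — a tighter bound; and (2) the paper shifts each digit by $+U$ and uses base $2Uk+1$ so that all digits are non-negative and no carries can occur, whereas you keep signed digits and argue directly that with $B>2k\max|u_{j,i}|$ a zero sum forces every digit to vanish. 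Both packings are sound. In fact your per-row normalization gives packed integers of bit-size $O(dn\log M)$ rather than the paper's $O(dn^{2}\log M)$, so a clean accounting would give you $\tilde{O}(dn^{\lceil k/2\rceil+1}\log M)$, one factor of $n$ \emph{better} than the stated lemma; your closing paragraph, which pads the bound by a ``further factor of $n$'' for manipulating the integers, is overcautious, since Lemma~\ref{lem:ksum}'s $\log V$ term (together with the $\tilde{O}$) already absorbs the cost of arithmetic on long integers and the one-time cost of writing them down is dominated. Your worry about base calibration is addressed by exactly the inequality you wrote; it is not a real obstacle.
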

\begin{proof}
	Let $\mu_{i,j}$ and $\delta_{i,j}$ be the numerator and denominator of
	$\nu_{i,j}$ when written as an irreducible fraction. We define
	$$
		\zeta_{i,j} = \nu_{i,j}\prod_{\mathclap{(i,j) \in [d]\times[n]}} \delta_{i,j} =
		\frac{\mu_{i,j}\displaystyle\prod_{\mathclap{(i',j') \in [d]\times[n]}} \delta_{i',j'}}{\delta_{i,j}}.
	$$
	By definition $\zeta_{i,j}$ is an integer and its absolute value is bounded
	by $U = M^{n^2}$, that is, it can be
	represented using $O(n^2 \log M)$ bits. Moreover, if one of the hyperplanes
	contains the point $(\zeta_{i,1},\zeta_{i,2},\ldots,\zeta_{i,n})$, then it
	contains $\nu_i$. Construct $n$ integers of $O(dn^2 \log M)$ bits that can
	be written $\zeta_{1,j}+U,\zeta_{2,j}+U,\ldots,\zeta_{d,j}+U$ in base
	$2Uk+1$. The answer to our decision problem is ``yes'' if and only if there
	exists $k$ of those numbers whose sum is $kU,kU,\ldots,kU$. We simply
	subtract the number $U,U,\ldots,U$ to all $n$
	input numbers to obtain a standard $k$-SUM instance on $n$ integers of
	$O(dn^2 \log M)$ bits.
\end{proof}

We now have efficient implementations of steps \step{3} and \step{4}
of Algorithm~\ref{algo:meiser} and can proceed to the proof of the second part
of Theorem~\ref{thm:cube}.

\begin{proof}
The main idea consists of modifying the first iteration of Algorithm~\ref{algo:meiser}, by
letting $\varepsilon = n^{-\frac{k}{2}}$.
Hence we pick a random subset $\net$ of
$O(n^{k/2 + 2} \log^2 n)$ hyperplanes in $\H$ and use this as an \enet.
This can be done efficiently, as shown in~Appendix~\ref{app:sampling}.

Next, we need to locate the input $q$ in the arrangement induced by
$\net$. This can be done by running Algorithm~\ref{algo:meiser} on the set
$\net$. From the previous considerations on Algorithm~\ref{algo:meiser}, the
running time of this step is
$$
O(n|\net|) = \tilde{O}(n^{k/2 + 4}),
$$
and the number of queries is $O(n^3\log^3n)$.

Then, in order to be able to prune the hyperplanes in $\H$, we have to
compute a simplex $\simplex$ that does not intersect any hyperplane of
$\net$. For this, we observe that the above call to Algorithm~\ref{algo:meiser}
involves computing a sequence of simplices for the successive pruning
steps. We save the description of those simplices. Recall that there are
$O(\log n)$ of them, all of them contain the input $q$ and have vertices
coinciding with vertices of the original arrangement $\arrangement(\H)$. In
order to compute a simplex $\simplex$ avoiding all hyperplanes of
$\net$, we can simply apply Algorithm~\ref{algo:simplex} on the set of
hyperplanes bounding the intersection of these simplices. The running time
and number of queries for this step are bounded respectively by
$n^{O(1)}$ and $O(n^2\log n)$.

Note that the vertices of $\simplex$ are not vertices
of $\arrangement(\H)$ anymore. However, their coordinates lie in a finite set
(see~Appendix~\ref{app:bound})
\begin{lemma}\label{lem:bound}
Vertices of $\simplex$ have rational coordinates whose fraction representations
have their numerators and denominators absolute value bounded by
$O(C^{4n^5} n^{4n^5+2n^3+n})$, where $C$ is a constant.
\end{lemma}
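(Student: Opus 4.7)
The plan is to track how the bit-complexity of rational coordinates grows through the three algebraic layers that build the vertices of $\simplex$ out of the integer-coefficient hyperplanes of $\H\cup\B$: first the vertices of $\arrangement(\H\cup\B)$, then the hyperplanes supporting the facets of the $O(\log n)$ inner simplices constructed by the nested call to Algorithm~\ref{algo:meiser}, and finally the vertices produced by one more call of Algorithm~\ref{algo:simplex} on those facet hyperplanes.

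\emph{Layer one.} Every hyperplane in $\H\cup\B$ has integer coefficients of absolute value at most a constant $C$ (namely $C=1$ for \kSUM{}; for \kLDT{} one clears the rational coefficients to integers and takes the resulting bound). A vertex of $\arrangement(\H\cup\B)$ is the unique solution of an $n\times n$ integer linear system built from $n$ such hyperplanes, so by Cramer's rule together with Hadamard's inequality each coordinate is a rational number whose numerator and denominator are bounded in absolute value by $B_1 := (C\sqrt n)^n$. By the analysis of Algorithm~\ref{algo:simplex} given earlier, the $O(n\log n)$ vertices of the inner simplices all lie in $\arrangement(\H\cup\B)$ and are therefore bounded by $B_1$.

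\emph{Layer two.} Each facet of an inner simplex is the hyperplane through $n$ such vertices. I would obtain an integer equation for it by writing the hyperplane as the vanishing of an $(n+1)\times(n+1)$ determinant, clearing denominators row by row so that each scaled row has integer entries of magnitude at most $B_1^n$, and reading off the hyperplane coefficients as signed $n\times n$ cofactors of that integer matrix. Hadamard's inequality then bounds these integer coefficients by $B_2 := (B_1^n \sqrt n)^n$.

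\emph{Layer three.} Every vertex of $\simplex$ sits, by the analysis of Algorithm~\ref{algo:simplex}, at the intersection of $n$ of these facet hyperplanes, so it solves an $n\times n$ integer linear system whose entries have magnitude at most $B_2$. A last application of Cramer plus Hadamard then bounds the numerator and denominator of each coordinate by $B_3 := (B_2 \sqrt n)^n$. Substituting $B_1$ into $B_2$ into $B_3$ yields a bound of the form $C^{\mathrm{poly}(n)}\, n^{\mathrm{poly}(n)}$; a deliberately loose bookkeeping that absorbs every $\sqrt n$ factor and every denominator-clearing step into a single polynomial-in-$n$ exponent produces the announced expression $O(C^{4n^5}\, n^{4n^5 + 2n^3 + n})$. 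The main obstacle is precisely this bookkeeping: keeping the Hadamard and denominator-clearing factors from being double-counted as they propagate through the three compositions, while remaining loose enough to collect everything into a clean closed form. Everything else is a routine chain of Cramer--Hadamard estimates.
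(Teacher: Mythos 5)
Your proof follows the paper's own three-layer scheme exactly: first bound the coordinates of vertices of $\arrangement(\H\cup\B)$ via Cramer's rule on integer hyperplane coefficients; next bound the coefficients of the facet hyperplanes of the inner simplices, which pass through $n$ such vertices; finally bound the coordinates of a vertex of $\simplex$ as the solution of $n$ of those facet equations. The paper carries rational entries through layers two and three and invokes its Lemma~\ref{lem:detQ} (a rational-determinant bound that incurs an extra $D^{n^2}$ factor) together with Cramer's rule at each stage, which compounds to exponent roughly $4n^5$. You instead clear denominators row by row in layer two so that the facet coefficients come out as \emph{integers} (cofactors of an integer matrix), and you use Hadamard's inequality rather than the cruder $n!$ estimate; both choices are valid and in fact give a strictly tighter bound of the shape $C^{O(n^4)} n^{O(n^4)}$, which you then loosen back to the stated $O\bigl(C^{4n^5}\, n^{4n^5+2n^3+n}\bigr)$. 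Since the lemma asserts only an upper bound, your argument proves it, and it is essentially the paper's proof with cleaner bookkeeping.
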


We now are in position to perform the pruning of the hyperplanes in $\H$ with
respect to $\simplex$. The number of remaining hyperplanes after the pruning is
at most $\varepsilon n^k = O(n^{k/2})$. Hence from Corollary~\ref{cor:double}, the
pruning can be performed in time proportional $\tilde{O}(n^{\ceil{k/2} + 7})$.

Similarly, we can detect any hyperplane of $\H$ containing $\simplex$ using the
result of Lemma~\ref{lem:multiple} in time $\tilde{O}(n^{\lceil k/2\rceil +
8})$. Note that those last two steps do not require any query.

Finally, it remains to detect any solution that may lie in the remaining
set of hyperplanes of size $O(n^{k/2})$. We can again fall back
on Algorithm~\ref{algo:meiser}, restricted to those hyperplanes. The running
time is $\tilde{O}(n^{k/2 + 2})$, and the number of queries is still $O(n^3 \log^3
n)$.

Overall, the maximum running time of a step is $\tilde{O}(n^{\ceil{\frac{k}{2}} + 8})$,
while the number of queries is always bounded by $O(n^3\log^3 n)$.

\end{proof}

\section{Query size}
\label{sec:query-size}

In this section, we consider a simple blocking scheme that allows us to explore
a tradeoff between the number of queries and the size of the queries.

\begin{lemma}
\label{lem:block}
For any integer $b>0$, an instance of the \kSUM{} problem on $n>b$ numbers can be split into
$O(b^{k-1})$ instances on at most $k\lceil \frac{n}{b}\rceil$ numbers, so that every $k$-tuple
forming a solution is found in exactly one of the subproblems.
The transformation can be carried out in time $O(n\log n + b^{k-1})$.
\end{lemma}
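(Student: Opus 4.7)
\medskip

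The plan is to sort the input, partition it by rank into $b$ blocks, and then enumerate those block signatures that are a priori compatible with a zero sum. First I would sort the $n$ input numbers in $O(n\log n)$ time and split the sorted sequence into $b$ consecutive blocks $B_1,\ldots,B_b$, each of size at most $\lceil n/b\rceil$. Write $L_j$ and $R_j$ for the minimum and maximum value in $B_j$, so that $L_1\le R_1\le L_2\le\cdots\le L_b\le R_b$.

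The key observation is that if a \(k\)-tuple $(v_{i_1},\ldots,v_{i_k})$ sums to zero, then, listing its entries in non-decreasing order of value, the corresponding block indices form a non-decreasing sequence $b_1\le b_2\le\cdots\le b_k$ that necessarily satisfies $\sum_{j} L_{b_j}\le 0\le\sum_{j} R_{b_j}$: otherwise \emph{every} $k$-tuple using those blocks has strictly positive or strictly negative sum. I would then create one subproblem per such non-decreasing $k$-tuple of block indices: the \kSUM{} instance on the union $B_{b_1}\cup\cdots\cup B_{b_k}$, restricted to reporting $k$-tuples whose sorted block signature is exactly $(b_1,\ldots,b_k)$. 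Every subproblem has at most $k\lceil n/b\rceil$ elements, and each solution belongs to exactly one subproblem, namely the one indexed by its sorted block signature.

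To build the subproblems efficiently, I would iterate over the $\binom{b+k-2}{k-1}=O(b^{k-1})$ non-decreasing $(k-1)$-prefixes $(b_1,\ldots,b_{k-1})$ and, for each one, use binary search over the two monotone sequences $(L_j)$ and $(R_j)$ to identify the contiguous range of admissible $b_k\ge b_{k-1}$ satisfying the inequalities above. Traversing the prefixes in a Gray-code-like order lets us maintain $\sum L_{b_j}$ and $\sum R_{b_j}$ and the current admissible range incrementally in amortized $O(1)$ time per prefix, giving the claimed $O(n\log n+b^{k-1})$ total time.

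The main obstacle I expect is showing that the \emph{total} number of admissible $k$-tuples---and hence the number of subproblems---is $O(b^{k-1})$ rather than the naive $O(b^k)$ coming from all sorted $k$-multisets of blocks. I would prove this by a double-counting / codimension-one slab argument: the constraint $\sum_j L_{b_j}\le 0\le\sum_j R_{b_j}$ confines $(b_1,\ldots,b_k)$ to a slab in $[b]^k$ whose thickness along the $\sum L$ direction is $\sum_j(R_{b_j}-L_{b_j})$. By monotonicity of $(L_j)$ and $(R_j)$, fixing any $k-1$ of the block indices leaves a contiguous interval of valid values for the remaining index; the sum of the lengths of those intervals is then controlled, using $\sum_{j=1}^b(R_j-L_j)\le v_n-v_1$ and the fact that the blocks partition the real line, to give an average interval length of $O(1)$ per prefix and hence $O(b^{k-1})$ admissible $k$-tuples overall. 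This average bound is the only delicate step, because block widths can vary widely under a rank-based partition, and the telescoping has to be done carefully to avoid losing an extra factor of $b$.
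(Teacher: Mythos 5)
Your blocking scheme is exactly the paper's: sort, cut into $b$ rank blocks, associate to every $k$-tuple of block indices a candidate sub-instance, and observe that a tuple can contribute a zero sum only if the corresponding product of blocks straddles the hyperplane $\sum_j x_j=0$. You have also correctly identified that the only nontrivial claim is that the number of such ``admissible'' block signatures is $O(b^{k-1})$ rather than $O(b^k)$. Where the two arguments diverge is in how that bound is justified. The paper views the blocks as a partition of the line into $b$ intervals $I_1,\dots,I_b$, takes the induced grid on $\R^k$, and simply invokes the fact that the fixed hyperplane $x_1+\cdots+x_k=0$ meets only $O(b^{k-1})$ grid cells. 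This is clean because it reduces to a standard hyperplane-arrangement count: the $O(bk)$ grid hyperplanes $x_j=\mathrm{const}$, restricted to the $(k-1)$-dimensional hyperplane $\sum_j x_j=0$, partition it into $O\bigl((bk)^{k-1}\bigr)=O(b^{k-1})$ faces, each lying in a single grid cell.

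Your proposed replacement for that step is the part I would push back on. The bound $\sum_{j}(R_j-L_j)\le v_n-v_1$ does not by itself give an $O(1)$ average size for the admissible range of the last index: the slab $[-\sum_{j<k}R_{b_j},\,-\sum_{j<k}L_{b_j}]$ has width $\sum_{j<k}w_{b_j}$, and because a single wide block may be reused in many prefixes, the per-prefix interval length can be far from $O(1)$ and the ``telescoping'' you allude to does not close as written. A correct direct count is possible but needs a sharper dual argument: split the admissible $b_k$ for each prefix into the at most two intervals containing a slab endpoint (contributing $O(b^{k-1})$ total) plus those strictly inside the slab, and then double-count the strictly-inside pairs over $m$ rather than over prefixes --- fixing $m$ and asking which $(k-1)$-tuples have a $\sum x$-range containing $(-R_m,-L_m)$ reduces, by induction on $k$, to cells hit by a single translated hyperplane, giving $O(b^{k-2})$ per $m$ and hence $O(b^{k-1})$ overall. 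As it stands your sketch has a genuine gap exactly at the step you flagged as delicate; either the dual/inductive count above or the paper's arrangement argument is needed to close it. The remaining parts of your write-up (uniqueness of the sorted signature, sub-instance size $k\lceil n/b\rceil$, and the $O(n\log n+b^{k-1})$ construction time via sorting plus incremental enumeration of prefixes) are fine and at a comparable or greater level of detail than the paper's.
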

\begin{proof}
Given an instance on $n$ numbers, we can sort them in time $O(n\log n)$, then partition
the sorted sequence into \(b\) consecutive blocks \(B_1, B_2,\ldots ,B_b\) of equal size.
This partition can be associated with a partition of the real line
into $b$ intervals, say $I_1, I_2,\ldots ,I_b$. Now consider the partition of $\R^k$
into grid cells defined by the $k$th power of the partition $I_1, I_2,\ldots ,I_b$. The
hyperplane of equation $x_1 + x_2 +\cdots +x_k = 0$ hits $O(b^{k-1})$ such grid cells.
Each grid cell $I_{i_1}\times I_{i_2}\times \cdots \times I_{i_k}$ corresponds to a
\kSUM{} problem on the numbers in the set $B_{i_1}\cup B_{i_2}\cup \ldots \cup B_{i_k}$ (note that
the indices $i_j$ need not be distinct). Hence each such instance has size at most $k\lceil \frac{n}{b}\rceil$.
\end{proof}

Combining Lemma~\ref{lem:block} and Theorem~\ref{thm:cube} directly yields the following.
\begin{theorem}\label{thm:query-size}
For any integer $b>0$, there exists a $k\lceil \frac{n}{b}\rceil$-linear
decision tree of depth \(\tilde{O}(b^{k-4} n^3)\) solving the \kSUM{}
problem.
Moreover, this decision tree can be implemented as an
$\tilde{O}(b^{\floor{\frac{k}{2}}-9}n^{\ceil{\frac{k}{2}}+8})$
Las Vegas algorithm.
\end{theorem}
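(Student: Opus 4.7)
The plan is a direct composition: apply Lemma~\ref{lem:block} to partition the input, then run the algorithm of Theorem~\ref{thm:cube} on each resulting subproblem independently, and return ``yes'' iff at least one subproblem reports a solution.

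Concretely, I would first invoke Lemma~\ref{lem:block} to sort the $n$ inputs (using $O(n\log n)$ $2$-linear comparison queries, negligible against the bounds we aim for) and produce $O(b^{k-1})$ subproblems, each a \kSUM{} instance on at most $m := k\lceil n/b\rceil$ numbers drawn from the original input. By the lemma, every $k$-tuple summing to zero is captured in exactly one subproblem, so correctness of the disjunction follows.

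Next, I would apply Theorem~\ref{thm:cube} to each subproblem. Since that theorem's linear decision tree on $m$ inputs issues queries involving only those $m$ numbers, every query of the assembled tree is $k\lceil n/b\rceil$-linear in the original instance, matching the claimed query size. Summing the per-subproblem depths (and absorbing the constant $k$ into $\tilde{O}$) gives
$$
O(b^{k-1})\cdot \tilde{O}(m^3) \;=\; O(b^{k-1})\cdot \tilde{O}\!\left((n/b)^3\right) \;=\; \tilde{O}(b^{k-4}\,n^3),
$$
which is the claimed depth. For the RAM runtime, summing the Las Vegas bound of Theorem~\ref{thm:cube} across subproblems yields
$$
O(b^{k-1})\cdot \tilde{O}\!\left(m^{\ceil{k/2}+8}\right) \;=\; \tilde{O}\!\left(b^{\,k-1-\ceil{k/2}-8}\,n^{\ceil{k/2}+8}\right) \;=\; \tilde{O}\!\left(b^{\floor{k/2}-9}\,n^{\ceil{k/2}+8}\right),
$$
using the identity $k-\ceil{k/2}=\floor{k/2}$. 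The Las Vegas guarantee is preserved, since the overall computation is a deterministic preprocessing followed by a disjunction of independent Las Vegas executions.

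The argument is essentially bookkeeping, so there is no significant obstacle. The one point worth explicitly verifying is that Theorem~\ref{thm:cube}'s decision tree on a subproblem really queries only the $m$ inputs of that block (so that the $k\lceil n/b\rceil$-linear query-size bound genuinely holds in the combined tree); this is immediate since the Meiser-style algorithm of Theorem~\ref{thm:cube} only ever queries its given input vector.
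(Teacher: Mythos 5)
Your proposal is correct and matches the paper's approach exactly: the paper states that Theorem~\ref{thm:query-size} follows by ``combining Lemma~\ref{lem:block} and Theorem~\ref{thm:cube} directly,'' which is precisely the blocking-plus-per-subproblem-invocation composition and exponent bookkeeping you carry out.
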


The following two corollaries are obtained by taking $b=\frac{k}{\varepsilon}$, and
$b=O(n^{\varepsilon})$, respectively
\begin{corollary}\label{cor:en}
	For any constant $\varepsilon>0$ such that $b=\frac{k}{\varepsilon}$ and
	$\frac{n}{b}$ are positive integers,
there exists an $\varepsilon n$-linear decision tree of
depth $\tilde{O} (n^3)$ solving the \kSUM{} problem.
Moreover, this decision tree can be implemented as an
$\tilde{O}(n^{\ceil{\frac{k}{2}}+8})$ Las Vegas algorithm.
\end{corollary}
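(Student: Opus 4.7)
The plan is to obtain this corollary as a direct specialization of Theorem~\ref{thm:query-size} by plugging in the specific choice $b = k/\varepsilon$ and then tracking what happens to each of the three parameters (query size, query count, and running time) under the standing assumption that $k$ and $\varepsilon$ are constants.

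First I would check the query size. Theorem~\ref{thm:query-size} produces a $k\lceil n/b\rceil$-linear decision tree; with $b = k/\varepsilon$ and $n/b$ assumed to be a positive integer, the ceiling disappears and the query size becomes
\[
k \cdot \frac{n}{b} \;=\; k \cdot \frac{n\varepsilon}{k} \;=\; \varepsilon n,
\]
which is exactly the claimed query size. Next I would bound the depth: Theorem~\ref{thm:query-size} gives $\tilde{O}(b^{k-4} n^3)$ queries, and since $b = k/\varepsilon$ is a constant under the hypothesis that $k$ and $\varepsilon$ are constants, the factor $b^{k-4}$ is absorbed into the $\tilde{O}$ notation, leaving $\tilde{O}(n^3)$.

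Finally I would handle the running time in the same way. The time bound from Theorem~\ref{thm:query-size} is $\tilde{O}(b^{\lfloor k/2\rfloor - 9}\, n^{\lceil k/2\rceil + 8})$, and again $b^{\lfloor k/2\rfloor - 9} = (k/\varepsilon)^{\lfloor k/2\rfloor - 9}$ is a constant that disappears into $\tilde{O}(\cdot)$, yielding the claimed $\tilde{O}(n^{\lceil k/2\rceil + 8})$ Las Vegas bound. There is no real obstacle here: the whole argument is a clean substitution, and the only thing worth flagging is the integrality condition on $k/\varepsilon$ and $n/b$, which is assumed in the statement precisely so that the ceiling in the query-size bound of Theorem~\ref{thm:query-size} collapses to an equality and the resulting blocks have exact size $n/b$.
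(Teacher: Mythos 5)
Your proof is correct and matches the paper's own (one-line) derivation: the paper obtains Corollary~\ref{cor:en} exactly by substituting $b = k/\varepsilon$ into Theorem~\ref{thm:query-size} and absorbing the constant factors $b^{k-4}$ and $b^{\lfloor k/2\rfloor-9}$ into the $\tilde{O}$ notation. Your additional remark about why the integrality hypothesis is stated — so that $k\lceil n/b\rceil$ collapses to exactly $\varepsilon n$ — is a fair reading of the statement.
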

\begin{corollary}\label{cor:ne}
For any $\varepsilon$ such that $0<\varepsilon<1$,
there exists an $O(n^{1-\varepsilon})$-linear decision tree of
depth $\tilde{O} (n^{3+(k-4)\varepsilon})$ solving the \kSUM{} problem.
Moreover, this decision tree can be implemented as an
$\tilde{O}(n^{(1+\varepsilon)\frac{k}{2} + 8.5})$
Las Vegas algorithm.
\end{corollary}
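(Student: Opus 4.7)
The plan is a direct application of Theorem~\ref{thm:query-size} with the choice $b := \lceil n^{\varepsilon} \rceil$. Since $\varepsilon \in (0,1)$ is a fixed constant, $b$ is a positive integer for all sufficiently large $n$, and the rounding only affects constants absorbed in the $\tilde{O}$ notation; the three bounds announced in the statement should then fall out one by one.

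For the query size, Theorem~\ref{thm:query-size} gives queries of size at most $k \lceil n/b \rceil = O(k \cdot n^{1-\varepsilon}) = O(n^{1-\varepsilon})$ since $k$ is a constant. For the depth of the decision tree, the theorem yields $\tilde{O}(b^{k-4} n^3) = \tilde{O}(n^{(k-4)\varepsilon} \cdot n^3) = \tilde{O}(n^{3 + (k-4)\varepsilon})$, matching the claim. For the running time, substituting gives $\tilde{O}(b^{\lfloor k/2 \rfloor - 9} n^{\lceil k/2 \rceil + 8}) = \tilde{O}(n^{\varepsilon(\lfloor k/2 \rfloor - 9) + \lceil k/2 \rceil + 8})$.

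The only mildly delicate point is to verify that this last exponent is bounded by $(1+\varepsilon)\frac{k}{2} + 8.5$, which I expect to be the main (though very mild) obstacle: a short two-case check on the parity of $k$ should suffice. When $k$ is even, the exponent equals $(1+\varepsilon)\frac{k}{2} + 8 - 9\varepsilon$; when $k$ is odd, it equals $(1+\varepsilon)\frac{k}{2} + 8.5 - 9.5\varepsilon$. Since $\varepsilon > 0$, both quantities are at most $(1+\varepsilon)\frac{k}{2} + 8.5$, as required. The extra half in the constant $8.5$ (as compared to the $8$ appearing in Theorem~\ref{thm:query-size}) is precisely what is needed to absorb the $\lceil k/2 \rceil - \lfloor k/2 \rfloor$ asymmetry for odd $k$. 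Once Theorem~\ref{thm:query-size} is granted, the corollary is therefore just this substitution and a routine exponent calculation.
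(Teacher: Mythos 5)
Your proof is correct and takes exactly the route the paper intends: the paper states that Corollary~\ref{cor:ne} ``is obtained by taking $b=O(n^{\varepsilon})$'' in Theorem~\ref{thm:query-size} and leaves the arithmetic to the reader, whereas you carry it out explicitly. Your parity check is right: for even $k$ the exponent is $(1+\varepsilon)\frac{k}{2}+8-9\varepsilon$ and for odd $k$ it is $(1+\varepsilon)\frac{k}{2}+8.5-9.5\varepsilon$, both of which are bounded above by $(1+\varepsilon)\frac{k}{2}+8.5$ since $\varepsilon>0$, and your observation that the extra $0.5$ exists precisely to absorb the $\lceil k/2\rceil-\lfloor k/2\rfloor$ gap at $\varepsilon\to 0$ is the correct reading of where that constant comes from.
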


Note that the latter query complexity improves on $\tilde{O}(n^{\frac{k}{2}})$
whenever \(\varepsilon < \frac{1}{2}\) and $k\geq
\frac{3-4\varepsilon}{\sfrac{1}{2}-\varepsilon}$. Hence for instance, we obtain
an $O(n^\frac{3}{4})$-linear decision tree of depth $\tilde{O}(n^4)$ for the
\dsum[8] problem, and $o(n)$-linear decision trees of depth $o(n^4)$ for any
$k$.

\bibliography{bibliography}

\begin{thebibliography}{10}

\bibitem{ALW15}
Amir Abboud, Kevin Lewi, and Ryan Williams.
\newblock Losing weight by gaining edges.
\newblock 2015.
\newblock \href{http://arxiv.org/abs/1311.3054}{arXiv:1311.3054 [cs.CC]}.

\bibitem{AVW14}
Amir Abboud, Virginia~Vassilevska Williams, and Oren Weimann.
\newblock Consequences of faster alignment of sequences.
\newblock In {\em International Colloquium on Automata, Languages, and
  Programming ({ICALP} 2014)}, volume 8572 of {\em Lecture Notes in Computer
  Science}, pages 39--51. Springer, 2014.

\bibitem{AWY15}
Amir Abboud, Virginia~Vassilevska Williams, and Huacheng Yu.
\newblock Matching triangles and basing hardness on an extremely popular
  conjecture.
\newblock In {\em Symposium on Theory of Computing ({STOC} 2015)}, pages
  41--50, 2015.

\bibitem{AC05}
Nir Ailon and Bernard Chazelle.
\newblock Lower bounds for linear degeneracy testing.
\newblock {\em J. {ACM}}, 52(2):157--171, 2005.

\bibitem{ACLL14}
Amihood Amir, Timothy~M. Chan, Moshe Lewenstein, and Noa Lewenstein.
\newblock On hardness of jumbled indexing.
\newblock In {\em International Colloquium on Automata, Languages, and
  Programming ({ICALP} 2014)}, volume 8572 of {\em Lecture Notes in Computer
  Science}, pages 114--125. Springer, 2014.

\bibitem{A84}
Meyer~Friedhelm auf~der Heide.
\newblock A polynomial linear search algorithm for the $n$-dimensional knapsack
  problem.
\newblock {\em J. {ACM}}, 31(3):668--676, 1984.

\bibitem{BDP08}
Ilya Baran, Erik~D. Demaine, and Mihai Patrascu.
\newblock Subquadratic algorithms for {3SUM}.
\newblock {\em Algorithmica}, 50(4):584--596, 2008.

\bibitem{BH99}
Gill Barequet and Sariel Har{-}Peled.
\newblock Polygon-containment and translational min-hausdorff-distance between
  segment sets are {3SUM}-hard.
\newblock In {\em Symposium on Discrete Algorithms ({SODA} 1999)}, pages
  862--863, 1999.

\bibitem{B97}
Peter B{\"{u}}rgisser, Michael Clausen, and Mohammad~Amin Shokrollahi.
\newblock {\em Algebraic complexity theory}, volume 315 of {\em Grundlehren der
  mathematischen Wissenschaften}.
\newblock Springer, 1997.

\bibitem{CFJJM10}
Jean Cardinal, Samuel Fiorini, Gwena{\"e}l Joret, Rapha{\"e}l~M Jungers, and
  J~Ian Munro.
\newblock An efficient algorithm for partial order production.
\newblock {\em SIAM journal on computing}, 39(7):2927--2940, 2010.

\bibitem{CFJJM13}
Jean Cardinal, Samuel Fiorini, Gwena{\"e}l Joret, Rapha{\"e}l~M Jungers, and
  J~Ian Munro.
\newblock Sorting under partial information (without the ellipsoid algorithm).
\newblock {\em Combinatorica}, 33(6):655--697, 2013.

\bibitem{CGIMPS15}
Marco Carmosino, Jiawei Gao, Russell Impagliazzo, Ivan Mikhailin, Ramamohan
  Paturi, and Stefan Schneider.
\newblock Nondeterministic extensions of the strong exponential time hypothesis
  and consequences for non-reducibility.
\newblock {\em Electronic Colloquium on Computational Complexity ({ECCC}
  2015)}, 22:148, 2015.

\bibitem{CL15}
Timothy~M. Chan and Moshe Lewenstein.
\newblock Clustered integer {3SUM} via additive combinatorics.
\newblock In Rocco~A. Servedio and Ronitt Rubinfeld, editors, {\em Symposium on
  Theory of Computing ({STOC} 2015)}, pages 31--40. {ACM}, 2015.

\bibitem{Cla88}
Kenneth~L. Clarkson.
\newblock A randomized algorithm for closest-point queries.
\newblock {\em {SIAM} J. Comput.}, 17(4):830--847, 1988.

\bibitem{DL74}
David~P. Dobkin and Richard~J. Lipton.
\newblock On some generalizations of binary search.
\newblock In {\em Symposium on Theory of Computing ({STOC} 1974)}, pages
  310--316, 1974.

\bibitem{DL78}
David~P. Dobkin and Richard~J. Lipton.
\newblock A lower bound of the $\frac{1}{2} n^2$ on linear search programs for
  the knapsack problem.
\newblock {\em J. Comput. Syst. Sci.}, 16(3):413--417, 1978.

\bibitem{E99}
Jeff Erickson.
\newblock Lower bounds for linear satisfiability problems.
\newblock {\em Chicago Journal of Theoretical Computer Science}, 1999.

\bibitem{F15}
Ari Freund.
\newblock Improved subquadratic {3SUM}.
\newblock {\em Algorithmica}, 2015.
\newblock To appear.

\bibitem{GO95}
Anka Gajentaan and Mark~H. Overmars.
\newblock On a class of ${O}(n^2)$ problems in computational geometry.
\newblock {\em Comput. Geom.}, 5:165--185, 1995.

\bibitem{GS15}
O.~Gold and M.~Sharir.
\newblock Improved bounds for {3SUM}, $k$-{SUM}, and linear degeneracy.
\newblock {\em ArXiv e-prints}, 2015.
\newblock \href{http://arxiv.org/abs/1512.05279}{arXiv:1512.05279 [cs.DS]}.

\bibitem{GO04}
Jacob~E. Goodman and Joseph O'Rourke, editors.
\newblock {\em Handbook of Discrete and Computational Geometry, Second
  Edition}.
\newblock Chapman and Hall/CRC, 2004.

\bibitem{GP14}
Allan Gr{\o}nlund and Seth Pettie.
\newblock Threesomes, degenerates, and love triangles.
\newblock In {\em Foundations of Computer Science ({FOCS} 2014)}, pages
  621--630. IEEE, 2014.

\bibitem{H87}
David Haussler and Emo Welzl.
\newblock $\epsilon$-nets and simplex range queries.
\newblock {\em Discrete \& Computational Geometry}, 2(1):127--151, 1987.

\bibitem{KPP14}
T.~{Kopelowitz}, S.~{Pettie}, and E.~{Porat}.
\newblock Higher lower bounds from the {3SUM} conjecture.
\newblock {\em ArXiv e-prints}, 2014.
\newblock \href{http://arxiv.org/abs/1407.6756}{arXiv:1407.6756 [cs.DS]}.

\bibitem{M93}
S.~Meiser.
\newblock Point location in arrangements of hyperplanes.
\newblock {\em Information and Computation}, 106(2):286--303, 1993.

\bibitem{MO01}
Joseph S.~B. Mitchell and Joseph O'Rourke.
\newblock Computational geometry column 42.
\newblock {\em Int. J. Comput. Geometry Appl.}, 11(5):573--582, 2001.

\bibitem{P10}
Mihai Patrascu.
\newblock Towards polynomial lower bounds for dynamic problems.
\newblock In {\em Symposium on Theory of Computing ({STOC} 2010)}, pages
  603--610, 2010.

\bibitem{PW10}
Mihai Patrascu and Ryan Williams.
\newblock On the possibility of faster {SAT} algorithms.
\newblock In {\em Symposium on Discrete Algorithms ({SODA} 2010)}, pages
  1065--1075, 2010.

\bibitem{SS95}
William~L. Steiger and Ileana Streinu.
\newblock A pseudo-algorithmic separation of lines from pseudo-lines.
\newblock {\em Information Processing Letters}, 53(5):295--299, 1995.

\bibitem{Y82}
Andrew~Chi{-}Chih Yao.
\newblock On parallel computation for the knapsack problem.
\newblock {\em J. {ACM}}, 29(3):898--903, 1982.

\end{thebibliography}
\appendix

\section{Keeping queries linear in Algorithm~\ref*{algo:simplex}}
\label{app:keeplinear}
In Algorithm~\ref{algo:simplex}, we want to ensure that the queries we make in step
\step{2} are linear and that the queries we will make in the recursion step
remain linear too.
\begin{lemma}
	Algorithm~\ref{algo:simplex} can be implemented so that it uses $O(n\card{I})$ linear queries.
\end{lemma}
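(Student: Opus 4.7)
The strategy is to show that step~2 of Algorithm~\ref{algo:simplex}---the ray-shooting from $\nu$ through $q$ that computes $q'$, $\I_\theta$, and $\I_\tau$---uses $O(\card{\I})$ linear queries on $q$. Steps~0, 1, 3, and~4 access only fixed data (the base case, an LP on already-revealed signs, rank computations on constant vectors, and the convex hull of fixed-coordinate vertices), so they contribute no queries at all. Combined with the non-increasing behaviour of $\card{\I}$ along the recursion (as already argued in the main text) and the depth bound $n$, this gives the $O(n\card{\I})$ total claimed in the lemma.

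At level $i$ of the recursion, write $q^{(i)}$ for the current iterate (with $q^{(0)} = q$) and $\nu^{(i)}$ for the step-1 vertex, whose coordinates are rationals independent of $q$. Parameterize the ray as $\gamma_i(t) = \nu^{(i)} + t(q^{(i)} - \nu^{(i)})$ and write each $H \in \I^{(i)}$ as $L_H(x) = 0$ for a fixed affine form $L_H$. Then $\gamma_i$ hits $H$ at $t_H^{(i)} = A_H^{(i)}/B_H^{(i)}$ with $A_H^{(i)} := L_H(\nu^{(i)})$ a fixed rational and $B_H^{(i)} := L_H(\nu^{(i)}) - L_H(q^{(i)})$ affine in $q^{(i)}$. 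After $\card{\I^{(i)}}$ sign tests of the $B_H^{(i)}$ we know $\mathrm{sgn}(t_H^{(i)})$ for every $H$, and a subsequent comparison $t_H^{(i)} < t_{H'}^{(i)}$ reduces, up to the stored pair of $B$-signs, to the sign of $A_H^{(i)} B_{H'}^{(i)} - A_{H'}^{(i)} B_H^{(i)}$---again affine in $q^{(i)}$ because the $A$'s are constants. A linear scan of $O(\card{\I^{(i)}})$ such comparisons identifies the hyperplane(s) $H^{(i)*}$ minimizing $t_H^{(i)}$ over $\{\,t_H^{(i)} > 0\,\}$, and hence yields $\I_\theta^{(i)}$.

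It remains to bound the cost of a single sign test of an affine form $\ell$ in $q^{(i)}$ in terms of linear queries on the input $q$; I claim, by induction on $i$, that it is exactly one fresh linear query (plus a stored sign-correction). The base $i=0$ is immediate. For the inductive step, $q^{(i+1)} = \nu^{(i)} + (A^{(i)}/B^{(i)})(q^{(i)} - \nu^{(i)})$ where $A^{(i)}$ is a fixed rational and $B^{(i)}$ is an affine form in $q^{(i)}$ whose sign was already computed during the ray-shooting at level $i$. The identity
\[
B^{(i)}\,\ell(q^{(i+1)}) \;=\; B^{(i)}\,\ell(\nu^{(i)}) \;+\; A^{(i)}\,\ell(q^{(i)} - \nu^{(i)})
\]
exhibits an explicitly computable affine form $\tilde\ell$ in $q^{(i)}$ with $\mathrm{sgn}(\ell(q^{(i+1)})) = \mathrm{sgn}(B^{(i)}) \cdot \mathrm{sgn}(\tilde\ell(q^{(i)}))$; the induction hypothesis applied to $\tilde\ell(q^{(i)})$ then yields one fresh linear query on $q$. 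Consequently each of the $O(\card{\I^{(i)}})$ sign tests at level $i$ costs one linear query on $q$, and summing over $n$ levels gives the $O(n\card{\I})$ bound.

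The main place where care is needed is the bookkeeping around the degenerate subcases: $A_H^{(i)} = 0$, which means $\nu^{(i)} \in H$ and marks $H$ as an ``entry wall'' that must be filtered out of the minimization; $B_H^{(i)} = 0$, which means the ray is parallel to $H$ inside the affine hull of $\cell^{(i)}$ and so produces no intersection; and the sign flip in the cross-product comparison when $B_H^{(i)} B_{H'}^{(i)} < 0$, which must be absorbed into the stored correction. Once the ratio representation $t_H = A_H/B_H$ with $A_H$ independent of the current iterate is maintained, no expression the algorithm manipulates is ever quadratic in any iterate, so the inductive collapse above is never defeated by a cross-product blow-up.
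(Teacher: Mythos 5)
Your proposal is correct and takes essentially the same approach as the paper: you parameterize the ray, write each intersection parameter as a ratio with a constant numerator and a denominator affine in the current iterate, and collapse sign tests back to linear queries on $q$ by multiplying through by the denominators accumulated along the recursion. The paper carries these accumulated factors $\prod_{k} d_{\theta_k}\cdot\vec{\nu q}^{(k)}$ explicitly in its query formulas, whereas you package the same cancellation into an induction on the recursion level; this is a presentational difference, not a different argument.
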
%
\begin{proof}
Let us first analyze what the queries of step \step{2} look like. In addition
to the input point \(q\) we are given a vertex \(\nu\) and we want to find the
projection \(q'\) of \(q\) in direction \(\vec{\nu q}\) on the hyperplanes of
\(\I_{\theta}\). Let the equation of \(H_{i}\) be \(\Pi_{i}(x) =
c_{i} + d_{i}
\cdot x = 0\) where \(c_{i}\) is a scalar and \(d_{i}\) is a
vector.
The projection of \(q\) along \(\vec{\nu q}\) on a hyperplane \(H_i\) can thus
be written\footnote{Note that we project from \(\nu\) instead of \(q\). We are
allowed to do this since \(\nu + \lambda_{i} \vec{\nu q} = q + (\lambda_i - 1)
\vec{\nu q}\) and there is no hyperplane
separating \(q\) from \(\nu\).}
\(\rho(q,\nu,H_i) = \nu + \lambda_{i} \vec{\nu q}\) such that \(\Pi_{i}(\nu +
		\lambda_{i} \vec{\nu q}) = c_{i} + d_{i} \cdot \nu +
		\lambda_{i} d_{i} \cdot
\vec{\nu q} = 0\). Computing the closest hyperplane amounts to finding
\(\lambda_{\theta} = \min_{\lambda_i > 0} \lambda_i\). Since \(\lambda_i = -
\frac{c_i + d_i \cdot \nu}{d_i \cdot \vec{\nu q}}\) we can test whether
\(\lambda_i > 0\)
using the linear query\footnote{Note that if $c_i + d_i \cdot \nu = 0$ then
$\lambda_i=0$, we can check this beforehand for free.}
\(-\frac{d_i \cdot \vec{\nu q}}{c_i + d_i
\cdot \nu} \ask{>} 0\). Moreover, if \(\lambda_i > 0\) and \(\lambda_j > 0\)
we can test whether $\lambda_i < \lambda_j$ using the linear query \(
\frac{d_i \cdot \vec{\nu q}}{c_i + d_i \cdot \nu}
\ask{<}
\frac{d_j \cdot \vec{\nu q}}{c_j + d_j \cdot \nu}\).
Step \step{2} can thus be achieved using \(O(1)\) \((2k)\)-linear queries per
hyperplane of \(\net\).

In step \step{4}, the
recursive step is carried out on \(q' = \nu + \lambda_{\theta} \vec{\nu q} = \nu -
\frac{c_{\theta} + d_{\theta} \cdot \nu}{d_{\theta} \cdot \vec{\nu q}}
\vec{\nu q}\) hence comparing \(\lambda'_i\) to \(0\) amounts to performing the
query \(-\frac{d_i \cdot \vec{\nu q}'}{c_i + d_i \cdot \nu'}
\ask{>} 0\), which is not linear in \(q\). The same goes for comparing
\(\lambda'_i\) to \(\lambda'_j\) with the query
\(\frac{d_i \cdot \vec{\nu q}'}{c_i + d_i \cdot \nu'}
\ask{<}
\frac{d_j \cdot \vec{\nu q}'}{c_j + d_j \cdot \nu'}\).

However, we can multiply both sides of the inequality test by \(d_\theta
\vec{\nu q}\) to keep the queries linear as shown below. We must be careful to
take into account the sign of the expression \(d_\theta \vec{\nu q}\), this
costs us one additional linear query.

This trick can be used at each step of the recursion. Let \(q^{(0)} = q\),
then we have
$$
	q^{(s+1)} = \nu^{(s)} - \frac{c_{\theta_{s}} + d_{\theta_{s}} \cdot
	\nu^{(s)}}{d_{\theta_{s}} \cdot \vec{\nu q}^{(s)}}\vec{\nu q}^{(s)}
$$
and \( (d_{\theta_{s}}\cdot \vec{\nu q}^{(s)}) q^{(s+1)}\) yields a vector
whose components are linear in \(q^{(s)}\).
Hence,
\( (\prod_{k=0}^{s} d_{\theta_{k}} \cdot \vec{\nu q}^{(k)})
 q^{(s+1)}\) yields a vector
whose components are linear in \(q\),
and for all pairs of vectors \(d_i\) and \(\nu^{(s+1)}\)
we have that \( (\prod_{k=0}^{s} d_{\theta_{k}} \cdot \vec{\nu q}^{(k)}) (d_i
\cdot \vec{\nu q}^{(s+1)})\) is linear in \(q\).

Hence at the $s$th recursive step of the algorithm, we will perform
at most \(\card{\net}\) linear queries of the type
$$
	-\left(\prod_{k=0}^{s-1} d_{\theta_{k}} \cdot \vec{\nu q}^{(k)}\right)
\frac{d_i \cdot \vec{\nu q}^{(s)}}{c_i + d_i \cdot
	\nu^{(s)}} \ask{>} 0
$$
\(\card{\net} - 1\) linear queries of the type
$$
	\left(\prod_{k=0}^{s-1} d_{\theta_{k}} \cdot \vec{\nu q}^{(k)}\right)
	\frac{d_i \cdot \vec{\nu q}^{(s)}}{c_i + d_i \cdot \nu^{(s)}}
\ask{<}
\left(\prod_{k=0}^{s-1} d_{\theta_{k}} \cdot \vec{\nu q}^{(k)}\right)
\frac{d_j \cdot \vec{\nu q}^{(s)}}{c_j + d_j \cdot \nu^{(s)}}
$$
and a single linear query of the type
$$
	d_{\theta_{s-1}} \cdot \vec{\nu q}^{(s-1)} \ask{<} 0.
$$

In order to detect all hyperplanes \(H_i\) such that \(\lambda_i =
\lambda_\theta\) we can afford to compute the query $f(q) > g(q)$ for all query
$f(q) < g(q)$ that we issue, and vice versa.

Note that, without further analysis, the queries can become \(n\)-linear as
soon as we enter the \(\frac{n}{k}^{\text{th}}\) recursive step.
\end{proof}

\section{Algebraic computation trees}
\label{app:act}

We consider \emph{algebraic computation trees}, whose internal nodes
are labeled with arithmetic (\(r \gets o_1 \op o_2, \op \in
\{\,+,-,\times,\div\,\}\)) and branching (\(z : 0\)) operations. We say that an
algebraic computation tree $T$ \emph{realizes} an algorithm $A$ if the paths
from the root to the leaves of $T$ correspond to the execution paths of \(A\)
on all possible inputs \(q \in \mathbb{R}^n\), where \(n\) is fixed. A leaf is
labeled with the output of the corresponding execution path of \(A\). Such a
tree is \emph{well-defined} if any internal node labeled \(r \gets o_1 \op
o_2\) has outdegree \(1\) and is such that either \(o_k = q_i\) for some \(i\)
or there exists an ancestor \(o_k \gets x \op y\) of this node, and any
internal node labeled \(z : 0\) has outdegree \(3\) and is such that either \(z
= q_i\) for some \(i\) or there exists an ancestor \(z \gets x \op y\) of this
node. In the algebraic computation tree model, we define the complexity
\(f(n)\) of an algorithm $A$ to be the minimum depth of a well-defined
computation tree that realizes $A$ for inputs of size $n$.

In the algebraic computation tree model, we only count the operations that
involve the input, that is, members of the input or results of previous
operations involving the input. The following theorem follows immediately from
the analysis of the linearity of queries
\begin{theorem}\label{thm:act}
	The algebraic computation tree complexity of \(k\)-LDT is
	\(\tilde{O}(n^3)\).
\end{theorem}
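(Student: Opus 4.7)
My approach combines the linear decision tree of Theorem~\ref{thm:cube} with a careful accounting of the arithmetic operations needed to evaluate each linear query. The natural bound of $O(n)$ input-involving operations per query, applied to the $O(n^3\log^3 n)$ queries of Meiser's algorithm, would only yield $\tilde{O}(n^4)$. The key to shaving off the extra factor of $n$ is to amortize the evaluation of queries that share the same underlying ``projected point'' by maintaining an explicit representation of this point across the recursion of Algorithm~\ref{algo:simplex}.

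Concretely, following the notation of Appendix~\ref*{app:keeplinear}, at the $s$-th recursive step the algorithm must evaluate the sign of a linear form $P_s \cdot (d_i \cdot \vec{\nu q}^{(s)})$, where $P_s := \prod_{\ell=0}^{s-1} (d_{\theta_\ell} \cdot \vec{\nu q}^{(\ell)})$ and each $d_i$ has at most $k$ nonzero entries. I will maintain as additional state the scalar $P_s$ together with the vector $v^{(s)} := P_s\, q^{(s)}$, whose $n$ components are each linear in $q$ by the appendix. Given $(P_s, v^{(s)})$, the query value equals $d_i \cdot v^{(s)} - (d_i \cdot \nu^{(s)})\, P_s$ and can therefore be evaluated using only $O(k)$ input-involving arithmetic operations, since $(d_i \cdot \nu^{(s)})$ is a constant. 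The same rewriting reduces each comparison between two $\lambda_i$'s at level $s$ to $O(k)$ input operations.

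Next I will show that the state can be advanced from level $s$ to level $s+1$ using $O(n)$ input operations. From the identity already used in Appendix~\ref*{app:keeplinear} one derives the recurrences
\[
P_{s+1} = d_{\theta_s} \cdot v^{(s)} - (d_{\theta_s} \cdot \nu^{(s)})\, P_s
\]
and
\[
v^{(s+1)} = P_{s+1}\, \nu^{(s)} - (c_{\theta_s} + d_{\theta_s} \cdot \nu^{(s)})\, (v^{(s)} - P_s\, \nu^{(s)}).
\]
Each update is a constant number of scalar-vector combinations on length-$n$ vectors whose scalar coefficients are either constants or previously-computed input-involving quantities, hence $O(n)$ ops.

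Finally, I will add up the contributions. A single call to Algorithm~\ref{algo:simplex} has recursion depth $n$ and processes $O(|\net|) = \tilde{O}(n^2)$ hyperplanes per level, so it contributes $n \cdot O(n)$ input operations for state updates and $n \cdot O(|\net|) \cdot O(k) = \tilde{O}(n^3)$ for query evaluation. Step~\step{1} of Algorithm~\ref{algo:meiser} performs $\tilde{O}(n^2)$ queries that are $k$-sparse from the outset and thus cost $O(k)$ input operations each, for a further $\tilde{O}(n^2)$. Since Algorithm~\ref{algo:meiser} recurses $O(\log n)$ times, the total depth along any root-to-leaf path of the resulting algebraic computation tree is $\tilde{O}(n^3)$. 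The main subtlety I expect is verifying that the recurrences really admit an $O(n)$ per-level update using only binary arithmetic operations in the algebraic model; this boils down to checking that every scalar appearing in the formulas above is either a constant prepared for free or a quantity already present in the maintained state $(P_s, v^{(s)})$.
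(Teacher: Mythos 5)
Your proof is correct and follows the same approach as the paper: maintain the running product $P_s = \prod_{\ell<s} (d_{\theta_\ell}\cdot\vec{\nu q}^{(\ell)})$ incrementally across recursion levels, exploit the $k$-sparsity of each $d_i$ so that every query evaluation costs only $O(k)$ input-involving operations, and absorb the $O(n)$-per-level state update into the dominant $\tilde{O}(n^3)$ query cost. The paper's version is terser --- it leaves implicit the per-level maintenance of the projected point $q^{(s)}$ that you make explicit through $v^{(s)}=P_s\,q^{(s)}$ (a harmless variation; the model permits division, so maintaining $q^{(s)}$ directly would also do) --- but the amortization argument and the final bound are the same.
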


\begin{proof}
We go through each step of Algorithm~\ref{algo:meiser}.
Indeed, each \(k\)-linear query of step \step{1} can be implemented as
\(O(k)\) arithmetic operations, so step \step{1} has complexity
\(O(\card{\net})\).
The construction of the simplex in step \step{2} must be handled carefully.
What we need to show is that each \(n\)-linear query we use can be implemented
using $O(k)$ arithmetic operations. It is not difficult to see from the
expressions given in~Appendix~\ref{app:keeplinear} that a constant number of arithmetic
operations and dot products suffice to
compute the queries. A dot product in this case involves a constant number
of arithmetic operations because the \(d_i\) are such that they each have
exactly \(k\) non-zero components. The only expression that involves a
non-constant number of operations is the product \(\prod_{k=0}^{s}
d_{\theta_{k}} \cdot \vec{\nu q}^{(k)}\), but this is equivalent to
\((\prod_{k=0}^{s-1}
	d_{\theta_{k}} \cdot \vec{\nu q}^{(k)})(d_{\theta_{s}} \cdot
	\vec{\nu q}^{(s)})\)
where the first factor has already been computed during a previous step and
the second factor is of constant complexity. Since each query costs a constant
number of arithmetic operations and branching operations, step \step{2}
has complexity \(O(n\card{\net})\).
Finally, steps \step{3} and \step{4} are free since they do not involve the
input. The complexity of Algorithm~\ref{algo:meiser} in this model is thus also \(O(n^3
\log^2 n \log \card{\H{}})\).
\end{proof}

\section{Uniform random sampling}
\label{app:sampling}

Theorem~\ref{thm:enet} requires us to pick a sample of the hyperplanes
uniformly at random. Actually the theorem is a little stronger; we can draw
each element of $\net$ uniformly at random, only keeping distinct elements.
This is not too difficult to achieve for $k$-LDT when the $\alpha_i,i\in[k]$
are all distinct: to pick a hyperplane of the form $\alpha_0 + \alpha_1 x_{i_1}
+ \alpha_2 x_{i_2} + \cdots + \alpha_k x_{i_k} = 0$ uniformly at random, we can
draw each $i_j\in[n]$ independently and there are $n^k$ possible outcomes.
However, in the case of $k$-SUM, we only have $\binom{n}{k}$ distinct
hyperplanes. A simple dynamic programming approach solves the problem for
\kSUM{}. For \kLDT{} we can use the same approach, once for each class of equal
$\alpha_i$.

\begin{lemma}
	Given $n\in\N$ and $(\alpha_0,\alpha_1,\ldots,\alpha_k) \in \R^{k+1}$,
	a uniform random sample $\net$ of hyperplanes in $\R^n$ with equations of
	the form $\alpha_0 + \alpha_1 x_{i_1} + \alpha_2 x_{i_2} + \cdots +
	\alpha_k x_{i_k} =
	0$ can be computed in time $O(k\card{\net})$ and preprocessing time $O(kn)$.
\end{lemma}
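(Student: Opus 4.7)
The plan is to reduce uniform sampling over distinct hyperplanes to independent uniform sampling of a system of disjoint subsets of $[n]$, one per equivalence class of the coefficients. Two distinct-entry $k$-tuples $(i_1,\ldots,i_k)$ and $(j_1,\ldots,j_k)$ define the same hyperplane exactly when they are related by a permutation $\pi$ satisfying $\alpha_{\pi(l)}=\alpha_l$ for all $l$; equivalently, $\pi$ permutes the indices within each class $C_\ell$ of coefficient positions with equal $\alpha$-value. Writing $m_\ell=\card{C_\ell}$ with $m_1+\cdots+m_t=k$, the distinct hyperplanes (with distinct indices) are then in bijection with ordered tuples $(S_1,\ldots,S_t)$ of pairwise disjoint subsets $S_\ell\subseteq[n]$ with $\card{S_\ell}=m_\ell$, where $S_\ell$ lists the indices paired with the common $\alpha$-value of $C_\ell$. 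So it suffices to sample such tuples uniformly.

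The preprocessing does two things: first, sort $\alpha_1,\ldots,\alpha_k$ and compute the classes $C_\ell$ in $O(k\log k)$ time; second, fill a table of binomial coefficients $\binom{i}{j}$ for $0\le i\le n$ and $0\le j\le k$ via Pascal's recurrence, in $O(kn)$ time and space. To draw one hyperplane, I would process the classes in order $\ell=1,\ldots,t$, each time sampling $S_\ell$ uniformly at random from the $m_\ell$-subsets of the indices not already used. A single such subset can be drawn in $O(m_\ell)$ expected time by a Floyd-style sampler (draw $m_\ell$ i.i.d.\ uniform indices from the available pool and resolve collisions through a small hash set), with the precomputed binomial table serving as the basis for a DP-based fallback that, when the available pool is only marginally larger than $m_\ell$, decides inclusion of each remaining index from the ratio of two table entries in $O(1)$. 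Summing over the $t\le k$ classes yields $O(k)$ per hyperplane, hence $O(k\card{\net})$ over all samples.

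Correctness reduces to showing that the joint distribution on $(S_1,\ldots,S_t)$ is uniform. This follows from the chain rule: conditional on $S_1,\ldots,S_{\ell-1}$, each admissible $S_\ell$ is equally likely, and the number of admissible tuples is the product $\prod_\ell\binom{n-m_1-\cdots-m_{\ell-1}}{m_\ell}$ of the conditional outcome counts. Translating through the bijection above, each distinct hyperplane is output with the same probability, matching the uniform distribution. The specialization to $t=1$ recovers uniform sampling of $\binom{n}{k}$ hyperplanes for \kSUM{}, as claimed in the text preceding the lemma.

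The main obstacle is really the $O(k)$ per-sample cost rather than the combinatorics: a naive DP walk through all of $[n]$ takes $\Theta(n)$ per class and would only give an $O(n\card{\net})$ bound. Beating this requires the Floyd-style sampler above, which in turn needs the preprocessed binomial table to behave correctly in degenerate regimes; once both ingredients are in place, the bound in the lemma drops out.
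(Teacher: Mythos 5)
Your high-level decomposition matches the paper's: group the coefficient positions into equivalence classes of equal $\alpha$-value and sample each class independently, with the cross-class product giving uniformity. The within-class machinery differs, however. The paper defines $\omega(m,l)$ to count lexicographically sorted $l$-tuples from $m$ symbols via the recurrence $\omega(m,l)=\sum_{i=1}^m\omega(i,l-1)$, precomputes the $O(nk)$ table of values, and then draws each coordinate of a sorted tuple directly from the conditional distribution $P(x_{i_l}=x_o)=\omega(o,l-1)/\omega(m,l)$; no rejection or hashing is involved. You instead precompute binomial coefficients by Pascal's rule and draw each class's index \emph{set} by a Floyd-style rejection sampler against the pool of indices not yet used. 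Two caveats are worth flagging. First, your bijection only covers hyperplanes whose $k$ indices are all distinct: the $S_\ell$ are required to be honest subsets and pairwise disjoint. The paper's recurrence, as written, counts weakly increasing tuples (repeated indices within a class are permitted), and it samples the classes independently without enforcing disjointness across them; thus a hyperplane such as $2x_1+x_2+\cdots=0$, arising from $(i_1,i_2)=(1,1)\in[n]^k$, lies in the paper's sample universe but never appears in yours. (The paper is itself inconsistent here --- it quotes $\binom{n}{k}$ distinct hyperplanes for $k$-SUM while $\omega(n,k)$ as defined evaluates to $\binom{n+k-1}{k}$ --- so either universe serves the algorithm, but you should be aware they are not the same.) Second, your $O(k)$ per-draw bound holds only in expectation because of the hash set and the rejection loop, and the ``DP-based fallback for small pools'' is not spelled out precisely enough to audit. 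Since $k=O(1)$ the available pool always has size $n-O(1)$, so the fallback never fires and the expected bound stands; for non-constant $k$, that part would need a concrete argument, whereas the paper's direct conditional draws avoid the issue entirely (modulo a $\log n$ to locate the sampled index within the monotone cumulative $\omega$-table, which both you and the paper quietly absorb).
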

\begin{proof}
	We want to pick
	an assignment
	$a=\enum{(\alpha_1,x_{i_1}),(\alpha_2,x_{i_2}),\ldots,(\alpha_k,x_{i_k})}$
	uniformly at random. Note that all $x_i$ are distinct while the $\alpha_j$
	can be equal.

	Without loss of generality, suppose $\alpha_1 \le \alpha_2 \le \cdots \le
	\alpha_k$. There is a bijection between assignments and lexicographically
	sorted $k$-tuples
	$((\alpha_1,x_{i_1}),(\alpha_2,x_{i_2}),\ldots,(\alpha_k,x_{i_k}))$.

	Observe that $x_{i_j}$ can be drawn independently of $x_{i_{j'}}$ whenever
	$\alpha_j \neq \alpha_{j'}$. Hence, it suffices to generate a lexicographically sorted
	$\card{\chi}$-tuple of $x_i$ for each class $\chi$ of equal $\alpha_i$.

	Let $\omega(m,l)$ denote the number of lexicographically sorted $l$-tuples,
	where each element comes from a set of $m$ distinct $x_i$.
	We have
	$$
		\omega(m,l) = \begin{cases}
			1 & \text{if}\ l = 0\\
			\sum_{i=1}^{m}\omega(i,l-1) & \text{otherwise.}
		\end{cases}
	$$

	To pick such a tuple $(x_{i_1},x_{i_2},\ldots,x_{i_l})$ uniformly at random
	we choose $x_{i_l} = x_o$ with probability
	$$
		P(x_{i_l} = x_o) = \begin{cases}
			0 & \text{if}\ o > m\\
			\frac{\omega(o,l-1)}{\omega(m,l)} & \text{otherwise}
		\end{cases}
	$$
	that we append to a prefix $(l-1)$-tuple (apply the procedure recursively),
	whose elements come from a set of $o$ symbols. If $l=0$ we just
	return the empty tuple.

	Obviously, the probability for a given $l$-tuple to be picked is equal to
	$\frac{1}{\omega(m,l)}$.

	Let $X$ denote the partiton of the $\alpha_i$ into equivalence classes,
	then the number of
	assignments is equal to $\prod_{\chi \in X}\omega(n,\card{\chi})$.
	(Note that for $k$-SUM this is simply $\omega(n,k)$ since there is only a
	single class of equivalence.)
	For each equivalence class $\chi$ we draw independently a lexicographically
	sorted $\card{\chi}$-tuple on $n$ symbols using the procedure
	above. This yields a given assignment with probability
	$\frac{1}{\prod_{\chi \in X}\omega(n,\card{\chi})}$.
	Hence, this corresponds to a uniform random draw over the assignments.

	For given $n$ and $k$, there are at most $nk$ values $\omega(m,l)$ to
	compute, and for a given $k$-LDT instance, it must be computed only once.
	Once those values have been computed, making a random draw takes time
	$O(k)$.

\end{proof}

\section{Proof of Lemma~\ref*{lem:bound}}
\label{app:bound}
\begin{theorem}[Cramer's rule]\label{thm:cramer}
	If a system of $n$ linear equations for $n$ unknowns, represented in matrix
	multiplication form $Ax=b$,
	has a unique solution $x=(x_1,x_2,\ldots,x_n)^T$ then, for all $i \in [n]$,
	$$
		x_i = \frac{\det(A_i)}{\det(A)}
	$$
	where $A_i$ is $A$ with the $i$th column replaced by the column vector $b$.
\end{theorem}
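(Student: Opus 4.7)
The plan is to give the standard multilinearity proof of Cramer's rule, which avoids having to introduce the adjugate matrix explicitly. First I would observe that since the system $Ax=b$ has a unique solution, $A$ must be invertible, and hence $\det(A)\neq 0$; this ensures the right-hand side of the claimed formula is well defined. Let $a_1,a_2,\ldots,a_n$ denote the columns of $A$, so that the equation $Ax=b$ can be rewritten as
$$
    b = \sum_{j=1}^{n} x_j\, a_j.
$$

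Next I would fix $i\in[n]$ and compute $\det(A_i)$ by substituting this expression for $b$ into the $i$-th column of $A_i$ and applying multilinearity of the determinant in its columns:
$$
    \det(A_i)
    = \det\!\bigl(a_1,\ldots,a_{i-1},\; \textstyle\sum_{j=1}^n x_j a_j,\; a_{i+1},\ldots,a_n\bigr)
    = \sum_{j=1}^{n} x_j \det\!\bigl(a_1,\ldots,a_{i-1}, a_j, a_{i+1},\ldots,a_n\bigr).
$$
For every $j\neq i$, the matrix inside the determinant has column $a_j$ appearing both in position $i$ and in position $j$; because the determinant is alternating, any matrix with two identical columns has determinant zero, so each such term vanishes. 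Only the $j=i$ term survives, contributing $x_i\det(A)$. Dividing by the nonzero scalar $\det(A)$ yields $x_i=\det(A_i)/\det(A)$, as claimed.

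There is no real obstacle; the only property one must be careful to invoke is the alternating/multilinear characterization of the determinant on columns, which is what makes the $j\neq i$ terms collapse. If a more self-contained exposition is desired, one could instead prove the statement via the adjugate identity $A\cdot\operatorname{adj}(A)=\det(A)\,I$ combined with Laplace expansion along the $i$-th column of $A_i$, but the multilinearity argument above is shorter and uses strictly less machinery.
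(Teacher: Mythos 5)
Your proof is correct: uniqueness of the solution forces $A$ to be invertible (any kernel vector could be added to the solution), so $\det(A)\neq 0$, and the multilinearity-plus-alternating argument on the $i$-th column of $A_i$ cleanly isolates the $x_i\det(A)$ term. Note that the paper states Cramer's rule without proof, citing it as a classical fact used only in the proof of Lemma~\ref{lem:bound}, so there is no paper argument to compare against; your standard column-expansion proof is a perfectly adequate self-contained justification.
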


\begin{lemma}\label{lem:detZ}
	The absolute value of the determinant of an $n\times n$ matrix $M$ with
	integer entries is $O((Cn)^n)$ where $C$ is the maximum absolute value in
	$M$.
\end{lemma}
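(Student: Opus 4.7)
The plan is to prove the bound directly from the Leibniz formula for the determinant, which expresses $\det(M)$ as a signed sum over permutations of $[n]$. This is the most elementary route and yields exactly the bound $O((Cn)^n)$ claimed in the statement.

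First I would recall that for any $n\times n$ matrix $M$,
\[
\det(M) \;=\; \sum_{\sigma \in S_n} \operatorname{sgn}(\sigma)\prod_{i=1}^{n} M_{i,\sigma(i)},
\]
where $S_n$ denotes the symmetric group on $[n]$. By the triangle inequality,
\[
|\det(M)| \;\leq\; \sum_{\sigma \in S_n} \prod_{i=1}^{n} |M_{i,\sigma(i)}|.
\]

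Next I would bound each summand. Since every entry of $M$ is an integer of absolute value at most $C$, each product $\prod_{i=1}^{n}|M_{i,\sigma(i)}|$ is at most $C^n$. There are $|S_n| = n!$ permutations, so
\[
|\det(M)| \;\leq\; n!\,C^n.
\]
Finally I would use the crude estimate $n! \leq n^n$ to conclude
\[
|\det(M)| \;\leq\; n^n\,C^n \;=\; (Cn)^n,
\]
which gives the desired $O((Cn)^n)$ bound.

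There is essentially no obstacle here: the argument is a one-line application of Leibniz's formula combined with $n! \leq n^n$. The only minor point worth noting is that a sharper bound, namely $|\det(M)| \leq (\sqrt{n}\,C)^n$, is available via Hadamard's inequality, but the weaker estimate above is already sufficient for the claimed statement and avoids invoking any inequality beyond the triangle inequality.
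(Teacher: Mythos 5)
Your proof is correct and follows essentially the same route as the paper, which also bounds the determinant as a sum of $n!$ signed products each of absolute value at most $C^n$ and uses $n! \le n^n$. No issues.
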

\begin{proof}
	The determinant of such a matrix is the sum of $2n!$ integers with maximum
	absolute value $C^n$.
\end{proof}

\begin{lemma}\label{lem:detQ}
	The determinant of an $n\times n$ matrix $M$ with
	rational entries can be represented as a fraction whose numerators and
	denominators absolute values are bounded by $O((ND^{n-1}n)^n)$ and $O(D^{n^2})$
	respectively, where $N$ and $D$
	are respectively the maximum absolute value of a numerator and a
	denominator.
\end{lemma}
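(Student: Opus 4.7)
The plan is to reduce the rational case to the integer case already handled by Lemma~\ref{lem:detZ}, by clearing denominators one row at a time. Write each entry as $M_{ij} = p_{ij}/q_{ij}$ with $|p_{ij}| \le N$ and $|q_{ij}| \le D$. I would multiply row $i$ by the scalar $c_i = \prod_{j=1}^n q_{ij}$. In the resulting matrix $M'$, entry $(i,j)$ becomes $c_i \cdot M_{ij} = p_{ij} \prod_{k \neq j} q_{ik}$, which is an integer (the $q_{ij}$ cancels) and has absolute value at most $N \cdot D^{n-1}$.

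By the elementary row-scaling rule for determinants,
$$
\det(M') \;=\; \left(\prod_{i=1}^n c_i\right) \det(M) \;=\; \left(\prod_{i,j} q_{ij}\right) \det(M),
$$
so $\det(M) = \det(M') \big/ \prod_{i,j} q_{ij}$. The denominator on the right has absolute value at most $D^{n^2}$, and applying Lemma~\ref{lem:detZ} to the integer matrix $M'$ with $C = N D^{n-1}$ gives $|\det(M')| = O((N D^{n-1} n)^n)$. This directly yields a fraction representation with the stated numerator and denominator bounds.

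The representation produced need not be in lowest terms, but that is fine since the statement only requires \emph{some} fraction matching the bounds. There is no genuine obstacle: the argument is a one-shot reduction combining the row-scaling identity with the integer bound of Lemma~\ref{lem:detZ}. The only thing worth double-checking is the exponents appearing after clearing denominators: each row contributes one factor of $D^{n-1}$ to the entry bound and $D^n$ to the determinant scaling, accounting precisely for the $D^{n-1}$ inside the numerator and the $D^{n^2}$ in the denominator.
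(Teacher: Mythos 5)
Your proof is correct and is exactly the paper's argument — the paper's entire proof is the terse ``Multiply each row $M_i$ of $M$ by $\prod_j d_{i,j}$. Apply Lemma~\ref{lem:detZ}.'' You have simply spelled out the row-scaling identity, the $ND^{n-1}$ entry bound, and the $D^{n^2}$ denominator bound that the paper leaves implicit.
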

\begin{proof}
	Multiply each row $M_i$ of $M$ by $\prod_j d_{i,j}$. Apply Lemma~\ref{lem:detZ}.
\end{proof}

We can now proceed to the proof of Lemma~\ref{lem:bound}.
\begin{proof}
	Coefficients of the hyperplanes of the arrangement are constant rational
	numbers, those can be changed to constant integers (because each
	hyperplane has at most $k$ nonzero coefficients). Let $C$ denote the
	maximum absolute value of those coefficients.

	Because of Theorem~\ref{thm:cramer} and Lemma~\ref{lem:detZ}, vertices of
	the arrangement have rational coordinates whose numerators and
	denominators absolute values are bounded by $O(C^n n^n)$.

	Given simplices whose vertices are vertices of the arrangement, hyperplanes
	that define the faces of those simplices have rational coefficients whose
	numerator and denominator are bounded by $O(C^{2n^3} n^{2n^3+n})$ by
	Theorem~\ref{thm:cramer} and Lemma~\ref{lem:detQ}. (Note that some
	simplices might be not fully dimensional, but we can handle those by adding
	vertices with coordinates that are not much larger than that of already
	existing vertices).

	By applying Theorem~\ref{thm:cramer} and Lemma~\ref{lem:detQ} again, we
	obtain that vertices of the arrangement of those new hyperplanes (and thus
	vertices of $\simplex$) have rational coefficients whose numerator and
	denominator are bounded by $O(C^{4n^5} n^{4n^5+2n^3+n})$.
\end{proof}

\end{document}